	\pgfplotsset{compat=newest}
	\newtheorem{theorem}{Theorem}[section]  
	\newtheorem{cor}[theorem]{Corollary}
	\newtheorem{prop}[theorem]{Proposition}
	\newtheorem{lemma}[theorem]{Lemma}
	\numberwithin{equation}{section}
	\theoremstyle{definition}
\newcommand{\caN}{{\mathcal N}}
\newcommand{\caQ}{{\mathcal Q}}
\newcommand{\caV}{{\mathcal V}}
\newcommand{\bbN}{{\mathbb N}}
\newcommand{\bbR}{{\mathbb R}}
\newcommand{\cK}{C_{\mathrm K}}
\newcommand{\cD}{C_{\mathrm D}}
\newcommand{\cH}{C_{\mathrm H}}
\newcommand{\ess}{\underset{Q}{\sup}}
\newcommand{\esi}{\underset{Q}{\inf}}
\begin{document}
		\title{Counting eigenvalues of Schrödinger operators using the landscape function}

		\author{Sven Bachmann$^1$, Richard Froese$^1$, Severin Schraven$^1$ \\
			\\
			$^1$ Department of Mathematics, The University of British Columbia, \\
			Vancouver, BC V6T 1Z2, Canada}
		
		\maketitle
		
		\begin{abstract}
We prove an upper and a lower bound on the rank of the spectral projections of the Schr\"odinger operator $-\Delta + V$ in terms of the volume of the sublevel sets of an effective potential $\frac{1}{u}$. Here, $u$ is the `landscape function' of~\cite{DFM}, namely a solution of $(-\Delta + V)u = 1$ in $\bbR^d$. We prove the result for non-negative potentials satisfying a Kato-type and a doubling condition, in all spatial dimensions, in infinite volume, and show that no coarse graining is required. Our result yields in particular a necessary and sufficient condition for discreteness of the spectrum. In the case of polynomial potentials, we prove that the spectrum is discrete if and only if no directional derivative vanishes identically.
		\end{abstract}

\section{Introduction}

In a celebrated body of work, Fefferman and Phong~\cite{FP} carried out an extensive analysis of the spectrum of self-adjoint differential operators based on the uncertainty principle, namely the fact that there is lower bound on the localization of the Fourier transform of a function that is well localized in space. Among the far reaching consequences of this old observation, they show that the number of eigenvalues $E_j$ of a Schr\"odinger operator with positive polynomial potential $V$, below an energy $\mu$, is equivalent to a coarse-grained notion of the volume of the sublevel sets of the potential at $\mu$. Precisely, they count the number of boxes of side length of order $\mu^{-1/2}$ inside which $V$ is less than or equal to $\mu$. This coarse graining is shown to be a necessary feature that arises from the uncertainty principle: for a test function to fit into a very narrow box, its kinetic energy must be large.

The Fefferman-Phong result is a wide generalization of the classical Weyl law, which is an asymptotic result for the number of eigenvalues of $-\Delta - \lambda V$ as $\lambda\to\infty$, see~\cite{W1, W2, W3, W4}, and \cite{Iv} for many extensions and variations. It is also closely related to the Lieb-Thirring inequalities \cite{LT} (see also \cite{FLW} for a recent overview)
\begin{equation*}
\sum_{j}E_j^\gamma \leq L_{\gamma, d}\int_{\bbR^d}V(x)^{\gamma + d/2}dx.
\end{equation*}
The case $\gamma = 0$ of interest to us in this work was obtained in dimensions $d\geq 3$ independently and by very different techniques by Cwickl~\cite{Cw}, Lieb~\cite{Li}, and Rozenblum~\cite{Ro, Ro1}, and we shall refer to it as the CLR inequality. It is well-known that the CLR inequality cannot hold in complete generality for dimensions $d=1,2$. A general result in one dimension is the classical Calogero inequality \cite{Ca}. It can further be used to obtain CLR-type bounds in some two-dimensional cases, see e.g.~\cite{NL, La1, Sha, GN, LRS}. For more general kinetic energies and simplified proofs of the CLR inequality see \cite{F, HHRV, HKRV} and the references therein.

Shen generalized the results of Fefferman-Phong to potentials in some reverse H\"older class and also in the presence of a magnetic field \cite{She2, She3}. For this he used previously established $L^p$ estimates for such Schrödinger operators \cite{She1} (in the case of nonnegative polynomials, see also Smith \cite{Sm} and Zhong \cite{Zh}). These results rely on estimates on the Green's function, see Davey-Hill-Mayboroda \cite{DHM}, as well as Mayboroda-Poggi \cite{MP} for optimal kernel estimates for more general elliptic operators and in the presence of a magnetic field, in dimensions $d\geq 3$. Poggi \cite{P} considered potentials of Kato-type defined in more general domains. In dimension $d=2$ Christ \cite{Ch} obtained kernel estimates under stronger assumptions. We point out that Otelbaev obtained similar two-sided estimates earlier \cite{O1, O2}, see also~\cite{MN, MLBN} for related bounds.

The Fefferman-Phong approach was also extended by David-Filoche-Mayboroda \cite{DFM} who introduced a new technique which is central to the present work. While the bounds of Fefferman-Phong only depend on the dimension and the degree of polynomial, the dependence on the potential in~DFM is more subtle and relies on the so-called landscape function. One considers $-\Delta + V$  on finite boxes $\Lambda_L$ of side length $L$ and define the \emph{landscape function} as the solution of 
		\begin{equation}\label{landscape eq}
			(-\Delta +V) u_L =1
		\end{equation}
		with suitable boundary conditions. A formal computation shows that the operator $-\Delta + V$ on~$L^2(\Lambda_L,dx)$ is unitarily equivalent to the elliptic operator
		\begin{align*} 
			-\frac{1}{u_L^2}\mathrm{div}\,u_L^2 \nabla + \frac{1}{u_L}
		\end{align*}
		on $L^2(\Lambda_L, u_L^2(x) dx)$. We shall henceforth call $1/u_L$ the \emph{effective potential}. The landscape function encodes (via the effective potential) a substantial part of the spectral information of the original Schr\"odinger operator. Among the many rigorous results using the effective potential~\cite{ADFM, St1, BLG, FMT, DFM, WZ, AFMWZ, St2, CWZ}, we are interested in the following, see~\cite{DFM}: If $\mathcal{N}^V_L(\mu)$ is the number of eigenvalues of $-\Delta + V$, counted with multiplicity, on the box with sidelength $L$ (with periodic boundary conditions) that are smaller than $\mu$, then there are constants $c,C>0$ such that
		\begin{align*} 
			N(c\mu,L) \leq \mathcal{N}^V_L(\mu) \leq N(C\mu, L),
		\end{align*}
where $N(\mu, L)$ is the coarsed-grained volume corresponding to the effective potential, namely it is the number of boxes of sidelength $\mu^{-1/2}$ in which $1/u_L \leq \mu$. While $C$ only depends on the dimension, the constant $c$ depends on the oscillation of~$u_L$.

In this paper, we consider the setting of \cite{She4} where the potentials are Kato-class and satisfy a doubling condition (the precise assumptions are (\ref{assump:Kato},\ref{assump:doubling})). We first study the existence of the landscape function in the whole space $\bbR^d$, namely the existence and positivity of solutions of~(\ref{landscape eq}). This purely PDE question is set in an apriori inconvenient space since, unlike in the case of finite boxes considered in~\cite{DFM}, the right hand side belongs to no $L^p$ space but for $p=\infty$. Not surprisingly, this can be addressed by considering a sequence of compactly supported functions converging pointwise to $1$. A similar approach was used by Poggi \cite[Theorem 1.18]{P} in dimension $d\geq 3$. With the landscape function and therefore the effective potential $\frac{1}{u}$ in hand, we turn to the problem of the counting of eigenvalues. We extend the DFM result to the infinite volume setting and without coarse graining. We show that the effective potential is confining if and only if the spectrum of the Schr\"odinger operator is discrete, in which case the measure of its sublevel sets is finite. We then prove that this measure controls the eigenvalue counting function, namely
\begin{equation*}
(c\mu)^{d/2} \caV(c\mu)\leq \caN^V(\mu)\leq (C\mu)^{d/2}\caV(C\mu),
\end{equation*}
where 
\begin{equation*}
\caV(\mu) = \int_{\left\{x\in\bbR^d:\frac{1}{u(x)}\leq\mu\right\}}dx.
\end{equation*}
Crucially, this CLR-type bound is valid in all spatial dimensions, and for all $\mu\in\bbR$. The latter is one of the advantages of working immediately in infinite volume, since otherwise the size of the domain $\Lambda_L$ yields a lower bound on the energy levels $\mu$ that can be considered.

As already seen in other applications of the DFM landscape function, this bound where the volume is not coarse-grained reflects the fact that the transformation
\begin{equation*}
-\Delta + V\mapsto-\frac{1}{u^2}\mathrm{div} \, u^2\nabla +\frac{1}{u}
\end{equation*}
`transfers' some of the kinetic energy to the effective potential.

One may wonder about the relationship of the effective potential with the semi-classical limit. Not surprisingly, one partial answer is provided by microlocal analysis. Indeed, the effective potential is given by the resolvent acting on the constant function. The inverse of $-\Delta + V$ is a pseudo-differential operator whose symbol is given to highest order by $\frac{1}{\vert \xi\vert ^2 + V(x)}$. In this approximation, we conclude that, formally,
\begin{equation*}
u(x) \simeq \int_{\bbR^d}\frac{\mathrm{e}^{\mathrm{i}\xi\cdot x}}{\vert \xi\vert ^2 + V(x)}\delta(\xi) d\xi = \frac{1}{V(x)}.
\end{equation*}
In other words, the effective potential $\frac{1}{u}$ is equal to the physical potential $V$ up to lower order corrections in the sense of microlocal analysis. It is precisely these corrections however that remove the need for coarse-graining. 

We conclude this introduction by commenting on the specific case $V(x,y) = x^2 y^2$ in two dimensions. In~\cite{Si1}, Simon provided five proofs that this operator has discrete spectrum with strictly positive first eigenvalue, that the number of eigenvalues below any fixed energy $\mu$ follows the Fefferman-Phong estimate (in particular, the coarse-grained volume is finite), and established the precise asymptotics in \cite{Si2}. These results are particularly remarkable given that the Lebesgue measure of $ \{ (x,y) \in \mathbb{R}^2 \ : \ x^2 y^2 \leq \mu  \}$ is infinite for every $\mu>0$. This shows in particular that the measure of the sublevel sets of the potential do not capture the spectral information, unlike the coarsed-grained volume associated with the potential $x^2y^2$. Using pseudo-differential calculus \cite{Rob} obtains similar results for more general degenerate polynomials. Our result removes the need for coarse graining, provided $V$ is replaced with the DFM effective potential $\frac{1}{u}$; Since it is simple to see that the effective potential is confining in this case (by Corollary \ref{cor:Polynomials}), we obtain yet another proof of discreteness of the spectrum of $-\Delta + x^2 y^2$ in $\bbR^2$.

\section{Main results}

We denote by $u$ a particular weak solution of
\begin{equation*}
(-\Delta + V)u = 1
\end{equation*}
in $\bbR^d$, which can be realized as the pointwise limit of specific Lax-Milgram solutions. It will be constructed in details in Section \ref{sect:infinite volume}. Our main result is that the volume of
\begin{equation*}
\left\{x\in\bbR^d:\frac{1}{u(x)}\leq\mu\right\}
\end{equation*}
is comparable to the rank of the spectral projection of (the Friedrich's extension of) $-\Delta +V$ at energies less than or equal to $\mu$.  
		
		\begin{theorem} \label{thm:main}
			Assume that $V\geq 0,V\not\equiv 0$ satisfies the following conditions:
				\begin{enumerate}
				\item  (Kato-type condition) There exists $\cK, \delta>0$, such that
				\begin{equation} \label{assump:Kato}
					\frac{1}{r^{d-2+\delta}}\int_{B(x,r)} V(y) dy \leq \cK \frac{1}{R^{d-2+\delta}} \int_{B(x,R)} V(y) dy
				\end{equation}
 for all $x\in \mathbb{R}^d$ and all $r,R$ with $0<r<R$.
 				\item (Doubling condition) There exists $\cD>0$ such that
				\begin{equation} \label{assump:doubling}
					\int_{B(x,2r)} V(y)dy \leq \cD \left( \int_{B(x,r)} V(y) dy + r^{d-2} \right)
				\end{equation}
				for all $x\in \mathbb{R}^d$ and all $r>0$.
			\end{enumerate}
			We denote by $H$ the Friedrichs extension of the positive symmetric operator $$-\Delta + V$$ defined on $C_c^\infty(\mathbb{R}^d)$. Let $\mathcal{N}^V(\mu)$ be the rank of the spectral projection $\mathbbm{1}_{(\infty,\mu]}(H)$. Then there exist constants $c, C>0$ such that for all $\mu \in \mathbb{R}$,
			\begin{equation} \label{eq:MainBathtub}
				(c\mu)^{\frac{d}{2}} \caV(c\mu)\leq \caN^V(\mu)\leq (C\mu)^{\frac{d}{2}}\caV(C\mu),
			\end{equation}
			where 
			\begin{equation*}
				\caV(\mu) = \int_{\left\{x\in\bbR^d:\frac{1}{u(x)}\leq\mu\right\}}dx.
			\end{equation*}
			The constants $c, C$ depend only on $\cK,\cD,\delta$ and the spatial dimension $d$.
		\end{theorem}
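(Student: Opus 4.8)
The plan is to reduce everything to the \emph{ground-state (landscape) transformation} and then count eigenvalues of the resulting weighted operator by Dirichlet--Neumann bracketing adapted to the natural length scale of the effective potential. First I would record that for $\mu\le 0$ both sides of~\eqref{eq:MainBathtub} vanish, since $H\ge 0$ and $u>0$ forces $\{1/u\le\mu\}=\varnothing$; so I may assume $\mu>0$. Writing $\phi=u\psi$ and integrating by parts using $(-\Delta+V)u=1$, the quadratic form of $H$ becomes
$$Q[\psi]=\int_{\bbR^d}\Big(|\nabla\psi|^2+\tfrac1u|\psi|^2\Big)\,u^2\,dx,$$
so that $U\psi=u\psi$ is unitary from $L^2(\bbR^d,u^2dx)$ onto $L^2(\bbR^d,dx)$ and conjugates $H$ into $L=-\frac{1}{u^2}\mathrm{div}\,u^2\nabla+\frac1u$. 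Hence $\caN^V(\mu)$ equals the number of eigenvalues of $L$ below $\mu$, and the effective potential $1/u$ has literally replaced $V$. The positivity and local regularity of $u$ established earlier make $U$ a genuine unitary and justify this identity rigorously.

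The engine of the counting is a pair of \emph{local} properties of $u$ that I would take from the preceding sections. Let $\ell(x)$ denote the critical length scale attached to $V$ (the reciprocal of Shen's maximal function $m(\cdot,V)$). The two facts I rely on are: (i) a Harnack/oscillation estimate, namely that $1/u$ varies by at most a fixed factor on the ball $B(x,\ell(x))$, together with the comparison $1/u(x)\asymp\ell(x)^{-2}$; and (ii) the slow variation of $\ell$, i.e.\ $\ell(y)\asymp\ell(x)$ whenever $|y-x|\lesssim\ell(x)$. All implied constants depend only on $\cK,\cD,\delta$ and $d$. The crucial consequence is that on any ball or cube of side $h\lesssim\ell(x)$ the weight $u^2$ is comparable to a constant, so the \emph{weighted} Poincaré inequality and the \emph{weighted} local Weyl estimates on that cube hold with the same constants as in the unweighted flat case.

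For the upper bound I would tile $\bbR^d$ by cubes of side $\mu^{-1/2}$ and use Neumann bracketing, $\caN^V(\mu)\le\sum_Q N_Q^{\mathrm{Neu}}(\mu)$. On a cube on which $1/u>C\mu$ everywhere the lowest Neumann eigenvalue of $L|_Q$ already exceeds $\mu$, so such a cube contributes nothing. A cube meeting $\{1/u\le C\mu\}$ satisfies $\ell\gtrsim\mu^{-1/2}$ there, so by (i) it is contained in $\{1/u\le C'\mu\}$ and the weight is nearly constant on it; the weighted Poincaré inequality then forces every nonzero Neumann mode above $\mu$, so the cube contributes $O(1)$ eigenvalues. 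Since the contributing cubes are disjoint and lie in $\{1/u\le C'\mu\}$, there are at most $\caV(C'\mu)/\mu^{-d/2}=\mu^{d/2}\caV(C'\mu)$ of them, which is the claimed right-hand side.

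For the lower bound a fixed scale is too crude near the boundary of the sublevel set, so I would instead cover $\{1/u\le c\mu\}$ with bounded overlap by critical-scale balls $B(x_j,\ell(x_j)/2)$ (Besicovitch/Vitali) and use Dirichlet bracketing, $\caN^V(\mu)\ge\sum_j N_j^{\mathrm{Dir}}(\mu)$. By (i)--(ii) each such ball lies in $\{1/u\le c'\mu\}$ and carries a nearly constant weight, and since $\mu\,\ell(x_j)^2\asymp\mu\,u(x_j)\gtrsim 1$, the weighted local Weyl \emph{lower} bound (built from the usual trial functions) yields $N_j^{\mathrm{Dir}}(\mu)\gtrsim(\mu\,\ell(x_j)^2)^{d/2}\asymp\mu^{d/2}\ell(x_j)^d$. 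Summing over the bounded-overlap cover and comparing with the volume gives $\caN^V(\mu)\gtrsim\mu^{d/2}\sum_j\ell(x_j)^d\gtrsim\mu^{d/2}|\{1/u\le c\mu\}|=\mu^{d/2}\caV(c\mu)$, after rescaling $\mu$ by the various constants. The \textbf{main obstacle} is entirely contained in steps~(i)--(ii): proving the Harnack inequality for the landscape function and the comparison $1/u\asymp m(\cdot,V)^2$ with slow variation in infinite volume. Once these are available the bracketing is routine, the only genuinely geometric subtlety being the lower bound's use of the critical scale rather than a fixed scale, which is what lets the volume near $\partial\{1/u\le c\mu\}$ be captured.
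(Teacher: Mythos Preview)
Your route is viable but differs from the paper's in two structural ways, and one of your claims is actually false.

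First, the paper never invokes the full unitary conjugation to $L=-u^{-2}\mathrm{div}\,u^2\nabla+1/u$. That identity is called ``formal'' in the introduction precisely because $u$ lies only in $H^1_{\mathrm{loc}}$ and in general not in the form domain; showing that $\phi\mapsto\phi/u$ intertwines the two form domains is an approximation argument you do not supply. The paper sidesteps this entirely. For the upper bound it proves only the \emph{inequality} $\langle(-\Delta+V)\varphi,\varphi\rangle\ge\langle u^{-1}\varphi,\varphi\rangle$ by testing the landscape equation against $|\varphi|^2/u_L$ and letting $L\to\infty$, then runs Min--Max on $H$ itself with the codimension-$N(C\mu)$ constraint subspace $\{v:\int_Q v=0\ \text{for all }Q\in\caF\}$ and the ordinary Poincar\'e inequality. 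For the lower bound it builds the explicit trial functions $\chi_Q u$ in the \emph{original} form and computes
\[
\int_{\bbR^d}\bigl(|\nabla(\chi_Q u)|^2+V\chi_Q^2u^2\bigr)=\int_{\bbR^d}\bigl(\chi_Q^2 u+|\nabla\chi_Q|^2u^2\bigr)
\]
straight from the landscape equation. Neither bound uses bracketing or requires defining the weighted operator on subdomains.

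Second, your assertion that ``a fixed scale is too crude near the boundary of the sublevel set'' is wrong: the paper carries out \emph{both} bounds at the single scale $\mu^{-1/2}$. The Harnack inequality for $u$ is precisely what makes this work---if a cube of side $\mu^{-1/2}$ meets $\{1/u\le\mu\}$ then it lies entirely in $\{1/u\le\cH\mu\}$, so passing from the box count to $\caV$ costs only a factor $\cH$ in the threshold. Your critical-scale Besicovitch cover with a local Weyl lower bound is therefore unnecessary here, and as written is not correct: Dirichlet bracketing requires \emph{disjoint} domains, not bounded overlap, so you must first extract a Vitali subfamily before summing. What the paper's direct approach buys is that the Rayleigh quotient of $\chi_Q u$ is bounded by $(1+\tilde C^2\cH^2)\mu$ with no appeal to any weighted Weyl estimate; what your bracketing buys is a more modular argument that would transfer to settings lacking such an explicit trial function.
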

	
		The assumption \eqref{assump:Kato} is a scale-invariant variant of the standard Kato condition. For $d\geq 3$ one obtains via Fubini's theorem that the condition \eqref{assump:Kato} is equivalent to
		\begin{align*}
			\int_{B(x,R)} \frac{V(y)}{\vert x-y\vert^{d-2}} dy \leq \frac{C}{R^{d-2}} \int_{B(x,R)} V(y) dy 
		\end{align*}
		for all $0<R$, all $x\in \mathbb{R}^d$ and some $C$ independent of $x,R$. Conditions \eqref{assump:Kato},\eqref{assump:doubling} are satisfied by potentials in the reverse Hölder class $(RH)_{d/2}$ (see \cite{She1}). In particular, this include nonnegative polynomials and fractional power functions $\vert x \vert^\alpha$ for $\alpha>-2$ for $d\geq 3$. On the other hand, potentials with compact support or exponential growth violate \eqref{assump:Kato}, respectively \eqref{assump:doubling}.

		Combining the above theorem with the property that $u$ varies slowly, we further derive an analogous result to \cite[Corollary 0.11]{She2}.
		
		\begin{cor} \label{cor:confining}
			Let $V$ be as in Theorem \ref{thm:main}. Then the spectrum of $H$ is discrete if and only if $\lim_{R \rightarrow \infty} \Vert u \Vert_{L^\infty(\mathbb{R}^d\setminus B(0,R))} =0$.
		\end{cor}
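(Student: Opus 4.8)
The plan is to read the equivalence directly off the two-sided bound \eqref{eq:MainBathtub}, thereby reducing discreteness of the spectrum to finiteness of the sublevel volumes $\caV(\mu)$, and then to translate finiteness of those volumes into uniform decay of $u$ by exploiting its slow variation. First recall that a self-adjoint operator bounded below has discrete spectrum if and only if $\mathbbm{1}_{(-\infty,\mu]}(H)$ has finite rank for every $\mu\in\bbR$, that is, $\caN^V(\mu)<\infty$ for all $\mu$. Since for each fixed $\mu>0$ the prefactors $(c\mu)^{d/2}$ and $(C\mu)^{d/2}$ are finite, the upper bound in \eqref{eq:MainBathtub} gives that $\caV(\nu)<\infty$ for all $\nu>0$ forces $\caN^V(\mu)<\infty$ for all $\mu$, while the lower bound gives the reverse implication. (For $\mu\le 0$ both $\caN^V(\mu)$ and $\caV(\mu)$ vanish, since $H\ge 0$ and $u>0$.) Thus it remains only to prove
\begin{equation*}
\caV(\nu)<\infty \ \text{ for all } \nu>0 \quad\Longleftrightarrow\quad \lim_{R\to\infty}\Vert u\Vert_{L^\infty(\bbR^d\setminus B(0,R))}=0.
\end{equation*}

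The easy direction is that uniform decay implies finiteness: if $u\to 0$ uniformly at infinity, then for every $t>0$ the set $\{x:\tfrac{1}{u(x)}\le \tfrac1t\}=\{x:u(x)\ge t\}$ is contained in a ball $B(0,R)$, choosing $R$ so that $\Vert u\Vert_{L^\infty(\bbR^d\setminus B(0,R))}<t$, and hence has finite Lebesgue measure. Writing $\nu=1/t$ this says $\caV(\nu)<\infty$ for all $\nu>0$.

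The substantive direction is the converse, and this is where the slow variation of $u$ is the main obstacle. I would argue by contradiction. The map $R\mapsto\Vert u\Vert_{L^\infty(\bbR^d\setminus B(0,R))}$ is nonincreasing, so it has a limit in $[0,\infty]$; suppose this limit is positive. Then there exist $\eta>0$ and a sequence $x_n\to\infty$ with $u(x_n)\ge\eta$. The key input is a quantitative regularity estimate for the landscape function, of the type available for Kato–doubling potentials through Shen's auxiliary length scale $m(\cdot,V)$: $u$ oscillates slowly on balls measured in that scale, so that $u\ge\eta/2$ on a ball $B(x_n,\rho)$. Because $u(x_n)\ge\eta$ bounds the relevant scale from below, the radius $\rho$ can be taken bounded below by a constant $\rho_0=\rho_0(\eta)>0$ independent of $n$. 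Passing to a subsequence so that the balls $B(x_n,\rho_0)$ are pairwise disjoint yields
\begin{equation*}
\caV(2/\eta)\ \ge\ \sum_{n}\,\bigl|B(x_n,\rho_0)\bigr|\ =\ \infty,
\end{equation*}
contradicting $\caV(2/\eta)<\infty$. Hence the limit is $0$, which is precisely the asserted uniform decay.

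In summary, the entire eigenvalue-counting content is packaged into \eqref{eq:MainBathtub} and used only through the equivalence ``$\caN^V(\mu)<\infty$ for all $\mu$ $\iff$ $\caV(\nu)<\infty$ for all $\nu>0$''; the genuinely new ingredient is the uniform lower bound $\rho_0(\eta)$ on the radius on which a large value of $u$ propagates, which is exactly what the slow variation of the landscape function provides and what prevents tall, thin spikes of $u$ escaping to infinity while keeping the superlevel sets of finite measure.
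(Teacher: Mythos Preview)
Your argument is correct and follows essentially the same route as the paper: both use \eqref{eq:MainBathtub} to reduce discreteness of the spectrum to finiteness of the sublevel volumes $\caV(\mu)$, and then invoke the Harnack-type slow variation of $u$ (the paper's \eqref{eq:Harnack}) to show that failure of uniform decay forces some $\caV(\mu)$ to be infinite. The only cosmetic differences are that the paper applies \eqref{eq:Harnack} directly to obtain cubes $Q(x_n,2\sqrt{\cH/\mu})$ in the sublevel set rather than speaking abstractly of a radius $\rho_0(\eta)$, and it is slightly more careful about the fact that $u$ is only defined almost everywhere (writing $\liminf_{\varepsilon\to 0^+}\operatorname{ess\,inf}_{B(x_n,\varepsilon)}u$ rather than $u(x_n)$).
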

		We will present the proofs in Section \ref{sect:bathtub}, while in Section \ref{sect:cor} we concentrate on the case where $V$ is a polynomial. For polynomial potentials one can further analyze the landscape function. In particular, one has the following:
		
		\begin{cor} \label{cor:Polynomials}
			Let $V$ be a polynomial that is bounded from below. Then the spectrum of $H$ is discrete if and only if none of the directional derivatives of $V$ vanishes identically.
		\end{cor}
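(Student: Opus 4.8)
The plan is to feed the polynomial structure into the confinement criterion of Corollary~\ref{cor:confining}. First I would reduce to a non-negative potential: since $V$ is bounded below, $\tilde V:=V-\inf V+1$ is a positive polynomial, $H$ and $-\Delta+\tilde V$ differ by a constant (so they are simultaneously of discrete spectrum), and adding a constant changes no derivative $\partial^\alpha V$ with $|\alpha|\ge1$; thus I may assume $V\ge1$. A non-negative polynomial of a fixed degree satisfies (\ref{assump:Kato}) and (\ref{assump:doubling}) with constants depending only on its degree and on $d$ --- such polynomials form a uniform reverse-H\"older class, as in the Fefferman--Phong--Shen framework (\cite{She1}) --- so Theorem~\ref{thm:main} and Corollary~\ref{cor:confining} are available. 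By Corollary~\ref{cor:confining}, $H$ has discrete spectrum if and only if $u(x)\to0$ as $|x|\to\infty$.

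Next I would convert the decay of $u$ into a pointwise condition on $V$. Since the landscape function is the resolvent applied to the constant, $u(x)=\int_{\bbR^d}G(x,y)\,dy$ with $G$ the Green's function of $H$, Shen's kernel estimates~\cite{She1} give $u(x)\asymp m(x,V)^{-2}$, where $m(\,\cdot\,,V)$ denotes Shen's auxiliary function; hence $u(x)\to0$ iff $m(x,V)\to\infty$. For a polynomial one has moreover $m(x,V)\asymp \sum_{0\le|\alpha|\le \deg V}|\partial^\alpha V(x)|^{1/(2+|\alpha|)}$ (\cite{She2}). As only finitely many exponents occur, $m(x,V)\to\infty$ is equivalent to $\sum_{0\le|\alpha|}|\partial^\alpha V(x)|\to\infty$; and since a polynomial that is bounded below and whose first derivatives all stay bounded is constant, the $|\alpha|=0$ term is redundant, yielding
\begin{equation*}
H \text{ has discrete spectrum}\iff \sum_{1\le|\alpha|\le \deg V}|\partial^\alpha V(x)|\to\infty\ \text{ as }|x|\to\infty .
\end{equation*}
The necessity half is then immediate: if $\partial_\xi V\equiv0$ for some direction $\xi\neq0$ (in particular if some coordinate derivative $\partial_i V\equiv0$), then $V$, and with it every $\partial^\alpha V$, is invariant under translation by $\xi$, so the sum above stays bounded along the ray $t\mapsto t\xi$; equivalently, in coordinates adapted to $\xi$ one has $H=-\partial_\xi^2\oplus H'$, whose spectrum contains the half-line $[\,\inf\sigma(H'),\infty)$ and is therefore not discrete.

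The substance is the converse, which I expect to be the main obstacle: assuming $V$ admits no direction of translation invariance, i.e. $\partial_\xi V\not\equiv0$ for every $\xi\neq0$, one must show $\sum_{1\le|\alpha|\le\deg V}|\partial^\alpha V(x)|\to\infty$. I would argue by contraposition and compactness. If the sum stayed $\le M$ along some $x^{(n)}\to\infty$ with $x^{(n)}/|x^{(n)}|\to\omega$, then all $\partial^\alpha V(x^{(n)})$ are bounded, so the translated polynomials $V(x^{(n)}+\,\cdot\,)$ have uniformly bounded Taylor coefficients. Inspecting each $\partial^\alpha V$ through its homogeneous expansion and using $|x^{(n)}|\to\infty$ forces every derivative of order $<\deg V$ of the leading form $V_{\deg V}$ to vanish at $\omega$, which by the Taylor/Euler identities makes $V_{\deg V}$ invariant under translation by $\omega$; peeling off this leading form and descending the degree filtration then yields $\partial_\omega V\equiv0$, contradicting the hypothesis. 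The delicate point --- and where I expect the real work, for which one may alternatively invoke the curve-selection lemma along a Puiseux arc of a sublevel set of $\sum_\alpha|\partial^\alpha V|^2$, or simply cite Shen's polynomial analysis~\cite{She2} --- is precisely this extraction of an \emph{exact} translation invariance of $V$ from the mere boundedness of its derivatives along an escaping sequence. I note finally that it is this coordinate-free non-degeneracy, $\partial_\xi V\not\equiv0$ for all $\xi$, that is equivalent to discreteness; the non-vanishing of the coordinate derivatives $\partial_i V$ is exactly its restriction to the axis directions and supplies the necessary half.
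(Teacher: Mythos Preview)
Your final remark is exactly right, and it exposes a genuine error in the corollary as stated: the coordinate condition ``$\partial_iV\not\equiv0$ for every $i$'' is strictly weaker than the directional one ``$\partial_\xi V\not\equiv0$ for every $\xi\neq0$'', and only the latter characterizes discreteness. The polynomial $V(x,y)=(x+y)^2+1$ has $\partial_xV=\partial_yV=2(x+y)\not\equiv0$, yet after the orthogonal change $(u,v)=\tfrac1{\sqrt2}(x+y,\,x-y)$ one finds $-\Delta+V=(-\partial_u^2+2u^2+1)+(-\partial_v^2)$, whose spectrum is the half-line $[\sqrt2+1,\infty)$. The paper's own proof contains the matching slip: in the displayed identity $\partial^{\alpha^{(d)}}V=\alpha^{(d)}!\,a_\beta\,x_d$ the term $(K-1)!\,\partial^\beta f_{K-1}$ has been silently dropped, and it need not vanish (in the example $K=2$, $f_2=1$, $\beta=0$, $f_1=2x$, so $\partial^{\alpha^{(2)}}V=\partial_yV=2x+2y$, not a function of $y$ alone). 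So your proposal is not a proof of the stated corollary --- because the stated corollary is false --- but you have correctly diagnosed this and outlined the right statement.

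For the corrected directional statement your route via $u\asymp m(\cdot,V)^{-2}\asymp M(\cdot,V)^{-2}$ coincides with the paper's, and the contraposition/compactness outline is sound. The ``peeling'' step you flag is genuinely delicate: from boundedness of all $\partial^\alpha V$ along $x^{(n)}$ with $\hat x^{(n)}\to\omega$ one does get $\partial^\alpha V_D(\omega)=0$ for $|\alpha|<D$ and hence $V_D(\cdot)=V_D(\cdot-\omega)$, but descending to $V_{D-1}$ requires controlling $|x^{(n)}|\cdot\partial^\alpha V_D(\hat x^{(n)})$ rather than just its limit, and that is not automatic from an arbitrary sequence. Your curve-selection alternative is the clean fix: the sublevel set $\{\sum_\alpha(\partial^\alpha V)^2\le C\}$ is semialgebraic, so if unbounded it contains a real-analytic arc going to infinity, along which the induction through the homogeneous layers goes through. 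The paper's constructive device --- exhibiting, for each coordinate $i$, a multi-index $\alpha^{(i)}$ with $|\partial^{\alpha^{(i)}}V(x)|\gtrsim|x_i|$ --- does \emph{not} survive even under the corrected directional hypothesis without first choosing coordinates adapted to $V$, so your compactness argument is not just different but necessary.
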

		
		If the condition of the last corollary is violated, then the corresponding operator has no eigenvalues. Indeed, after conjugating with a suitable rotation, we can assume without loss of generality that the polynomial does not depend on the last variable. We define $\widetilde{V}(x_1, \dots, x_{d-1}) = V(x_1, \dots, x_d)$. By taking a Fourier transform in the last variable, we get that the Friedrichs extension $H$ of $-\Delta +V$ is unitarily equivalent to the direct integral
		\begin{align*}
			\int_{\mathbb{R}}^{\oplus} (p^2+ \widetilde{H}) dp
		\end{align*}
		where $\widetilde{H}=-\Delta_{\mathbb{R}^{d-1}} + \widetilde{V}$. It follows from \cite[Theorem XIII.85]{RS4} that $\sigma(H)=[\min \sigma(\widetilde{H}), \infty)$ and $H$ admits no eigenvalue.
		
		Finally, we point out that Theorem \ref{thm:main} together with \eqref{eq:equivum} below recover the result of Shen \cite[Theorem 0.9]{She2} in our class of potentials and in the absence of a magnetic potential.

		\section{Existence of the landscape function in infinite volume} \label{sect:infinite volume}
		In this section we show the existence of the landscape function in infinite volume and establish some estimates of the landscape function in terms of the Fefferman-Phong-Shen maximal function. In \cite[Theorem 1.18, Theorem 1.31]{P} Poggi proves this for $d\geq 3$. We briefly recall the construction and explain how to extend this to the case $d=1,2$. For this we will rely on extensions of results for $d\geq 3$ due to Shen (see \cite[Proposition 1.8]{She4}). One of the key objects is the Fefferman-Phong-Shen maximal function $m(\cdot, V)$, which is defined as
		  \begin{equation} \label{eq:defm(x,V)}
			\frac{1}{m(x,V)} = \sup \left\{ r >0 \ : \ \frac{1}{r^{d-2}} \int_{B(x,r)} V(y) dy\leq \cD \right\},
		\end{equation}
		where $\cD$ is the constant in \eqref{assump:doubling}. This maximal function satisfies the following properties.
		
		\begin{lemma} \label{lm:m}
			Let $V$ satisfy the conditions of Theorem \ref{thm:main}. Then we have
			\begin{enumerate}
				\item $0<m(x,V) < \infty$ for every $x\in \mathbb{R}^d.$
				\item  For every $C'$, there exists $C$, depending only on $\cK,\cD, \delta$ and $C'$, such that
				\begin{equation} \label{eq:Harnackm}
					C^{-1} m(x,V) \leq m(y,V) \leq C m(x,V)
				\end{equation}
				for all $x,y\in \mathbb{R}$ with $\vert x - y \vert \leq \frac{C'}{m(x,V)}$.
				\item There exists $k_0, C>0$, depending only on $C_K,C_D, \delta$ and $d$, such that for all $x,y\in \mathbb{R}^d$ we have
				\begin{equation} \label{eq:upperboundm}
					m(x,V) \leq C m(y,V) \left( 1 + \vert x -y\vert m(y,V) \right)^{k_0}.
				\end{equation}
				\item  Let $d\leq 2$ and let  $\widetilde{V}(x,t) = V(x)$ for all $(x,t)\in \mathbb{R}^d \times \mathbb{R}$. Then for all $(x,t)\in \mathbb{R}^{d+1}$ and all $0<r<R$,
				\begin{align*}
					\frac{1}{r^{d-1+\delta}}\int_{B((x,t),r)} \widetilde{V}(z)dz \leq \cK \sqrt{2}^{d-1+\delta} \frac{1}{R^{d-1+\delta}} \int_{B((x,t),R)} \widetilde{V}(z) dz.
				\end{align*}
				Furthermore, 
				\begin{align*}
					\int_{B(x,2r)} \widetilde{V}(z) dz \leq 4\cD \left(\int_{B((x,t),r)} \widetilde{V}(z) dz + r^{d-1}\right)
				\end{align*}
				for all $(x,t)\in \mathbb{R}^{d+1}$ and all $r>0$.
			
				Finally, there exists $C>0$ depending on $\cK, \cD, \delta$ and $d$ such that 
				\begin{equation} \label{eq:dimension reduction}
					C^{-1}m(x,V) \leq m((x,t), \widetilde{V}) \leq C m(x,V)
				\end{equation}
				for all $(x,t)\in \mathbb{R}^d \times \mathbb{R}$.
			\end{enumerate}
		\end{lemma}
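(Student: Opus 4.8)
The plan is to separate the four claims into two groups. Properties (1)--(3) for $d\ge 3$ are exactly Shen's auxiliary-function estimates, which I would invoke verbatim from \cite[Proposition 1.8]{She4}; the genuine content is the dimension-reduction statement (4), which then lets me transport (1)--(3) down to the low-dimensional cases $d=1,2$. The finiteness and positivity in (1) I would establish directly and uniformly in the dimension. Writing $g(x,r)=\int_{B(x,r)}V$ and $\phi(x,r)=g(x,r)/r^{d-2}$, condition \eqref{assump:Kato} reads $\phi(x,r)/r^\delta\le \cK\,\phi(x,R)/R^\delta$ for $r<R$. Local integrability of $V$ forces $\phi(x,r)\to 0$ as $r\to 0$, so the set in the definition of $m(x,V)$ is nonempty and $m(x,V)<\infty$; since $V\ge 0$, $V\not\equiv 0$, one has $g(x,R_0)>0$ for some $R_0$, and the Kato bound $\phi(x,R)\ge \cK^{-1}(R/R_0)^\delta\phi(x,R_0)\to\infty$ forces that set to be bounded, so $m(x,V)>0$. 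Continuity of $\phi(x,\cdot)$ moreover gives the characterization $\phi(x,1/m(x,V))=1$, which I use below.

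The heart of the matter is (4). For the two integral inequalities for $\widetilde V(x,t)=V(x)$ on $\mathbb{R}^{d+1}$ I would slice the $(d+1)$-dimensional ball: with $\tau=s-t$ and the substitution $\tau=r\sigma$,
\begin{equation*}
\int_{B((x,t),r)}\widetilde V = \int_{-r}^{r} g\big(x,\sqrt{r^2-\tau^2}\big)\,d\tau = r\int_{-1}^{1} g\big(x,r\sqrt{1-\sigma^2}\big)\,d\sigma,
\end{equation*}
so that, with $\phi_\delta(x,\rho)=g(x,\rho)/\rho^{d-2+\delta}$,
\begin{equation*}
\frac{1}{r^{d-1+\delta}}\int_{B((x,t),r)}\widetilde V = \int_{-1}^{1}(1-\sigma^2)^{\frac{d-2+\delta}{2}}\,\phi_\delta\big(x,r\sqrt{1-\sigma^2}\big)\,d\sigma.
\end{equation*}
Applying \eqref{assump:Kato} to the integrand slice-by-slice (each $r\sqrt{1-\sigma^2}<R\sqrt{1-\sigma^2}$) produces the $(d+1)$-dimensional Kato condition; applying \eqref{assump:doubling} slice-by-slice and using the finite constant $\int_{-1}^{1}(1-\sigma^2)^{(d-2)/2}\,d\sigma$ produces the doubling condition, with the stated dimensional constants.

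For the comparability \eqref{eq:dimension reduction} I would combine the elementary radial bounds $\sqrt{2}\,r\,g(x,r/\sqrt{2})\le\int_{B((x,t),r)}\widetilde V\le 2r\,g(x,r)$ with the characterization from (1). Writing $\rho=1/m(x,V)$ and $\tilde\rho=1/m((x,t),\widetilde V)$, the analogous critical identity $\int_{B((x,t),\tilde\rho)}\widetilde V=\tilde\rho^{\,d-1}$ gives $\phi(x,\tilde\rho)\ge\tfrac12$ and $\phi(x,\tilde\rho/\sqrt{2})\le 2^{(d-3)/2}$. Feeding the first into the Kato bound $\phi(x,\tilde\rho)\le \cK(\tilde\rho/\rho)^\delta\phi(x,\rho)$ (valid when $\tilde\rho<\rho$) and the second into $\phi(x,\rho)\le \cK(\sqrt{2}\rho/\tilde\rho)^\delta\phi(x,\tilde\rho/\sqrt{2})$ (valid when $\tilde\rho/\sqrt{2}>\rho$), and using $\phi(x,\rho)=1$, sandwiches $\tilde\rho$ between fixed multiples of $\rho$, which is exactly \eqref{eq:dimension reduction}.

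Finally, for $d\le 2$ I would deduce (1)--(3) by reduction: applying (4) once when $d=2$ and twice when $d=1$ yields a potential on $\mathbb{R}^{d'}$ with $d'\ge 3$ that still satisfies the hypotheses of Theorem \ref{thm:main}, so Shen's estimates hold there, and \eqref{eq:dimension reduction} together with $|(x,t)-(y,t)|=|x-y|$ transports \eqref{eq:Harnackm} and \eqref{eq:upperboundm} back to $\mathbb{R}^d$, all constants being absorbed. The step I expect to be hardest is (4): the shift of exponent from $d-2$ to $d-1$ demands careful Fubini bookkeeping, and the comparability is delicate because reconciling that shift across scales is exactly what the Kato condition, rather than mere doubling, delivers.
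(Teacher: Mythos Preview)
Your argument is correct and shares the paper's overall architecture: establish (1) directly from the Kato condition, cite \cite[Proposition~1.8]{She4} for (2)--(3) when $d\ge 3$, prove (4), and then transport (2)--(3) down to $d\le 2$ via \eqref{eq:dimension reduction}. Your doubling computation for $\widetilde V$ is also the same slicing as the paper's.

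Two sub-arguments in (4) differ. For the Kato inequality for $\widetilde V$, you apply \eqref{assump:Kato} slice-by-slice through your exact identity for $r^{-(d-1+\delta)}\int_{B((x,t),r)}\widetilde V$, which in fact gives the sharper constant $\cK$; the paper instead uses the cruder containments $B((x,t),r)\subset B(x,r)\times(t-r,t+r)$ and $B(x,R/\sqrt2)\times(t-R/\sqrt2,t+R/\sqrt2)\subset B((x,t),R)$ together with a case split at $r=R/\sqrt2$, picking up the factor $\sqrt{2}^{\,d-1+\delta}$. For the comparability \eqref{eq:dimension reduction}, you work directly with the radial sandwich $\sqrt{2}\,r\,g(x,r/\sqrt2)\le\int_{B((x,t),r)}\widetilde V\le 2r\,g(x,r)$ and feed the resulting bounds on $\phi$ at the critical scales $\rho,\tilde\rho$ into the Kato condition. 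The paper takes a different route: it introduces a cube-based maximal function $m_Q$, observes that $m_Q((x,t),\widetilde V)=m_Q(x,V)$ holds \emph{exactly} by Fubini on the product cube, and then shows $m\sim m_Q$ via ball--cube containment and Kato. Your approach is more self-contained and avoids the auxiliary object; the paper's cube trick isolates the dimension reduction as an exact identity and pushes all the analytic work into the single comparison $m\sim m_Q$.

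One minor imprecision: for $d\ge 3$ the limit $\phi(x,r)\to 0$ as $r\to 0$ does not follow from local integrability alone (since $r^{d-2}\to 0$), but from the Kato bound $\phi(x,r)\le \cK(r/R)^\delta\phi(x,R)$ that you have already recorded.
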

\noindent Note that in~(\ref{eq:dimension reduction}), the exponent in the maximal function involving $V$ is $d$, while it is $d+1$ in the one involving $\tilde V$.
		\begin{proof}
			First we note that \eqref{assump:Kato} yields for all $0<r<R$
			\begin{align*}
				R^\delta\frac{1}{r^{d-2}} \int_{B(x,r)} V(y) dy \leq \cK r^\delta \frac{1}{R^{d-2}} \int_{B(x,R)} V(y)dy.
			\end{align*}
			Thus, $\lim_{r\rightarrow 0^+} r^{2-d} \int_{B(x,r)} V(y)dy=0$ and $\lim_{R\rightarrow \infty} R^{2-d} \int_{B(x,R)} V(y) dy = \infty$.
			This implies that $0<m(x,V)<\infty$. 
			
The validity of $2.$ and $3.$ for $d\geq 3$ is proved in \cite[Proposition 1.8]{She4}. Hence, it suffices to prove $4.$ for $2.$ and $3.$ to hold for all $d\geq 1$.
			
The fact that $\widetilde{V}$ satisfies both the Kato-type and the doubling condition are simple computations. Since $V$ satisfies the doubling condition in dimension $d$, 
			\begin{align*}
				\int_{B((x,t),2r)} \widetilde{V}(z) dz
				&\leq \int_{-2r}^{2r} \cD \left( \int_{B(x,\sqrt{r^2-s^2/4})} V(y) dy + \left(r^2-s^2/4\right)^{(d-2)/2} \right) ds \\
				&= 2\cD \left( \int_{B((x,t),r)} \widetilde{V}(z) dz + r^{d-1} \int_{-1}^1 \left(1-\sigma^2\right)^{(d-2)/2} d\sigma \right),
			\end{align*}
which yields the claim upon noting that the last integral is bounded above by $\pi$. 
			For the Kato-type condition, we first consider the case $0<\sqrt{2}r<R$. Then we have
			
			\begin{align*}
				\int_{B((x,t),r)} \widetilde{V}(z) dz
				&\leq \int_{B(x,r) \times (t-r,t+r)} \widetilde{V}(z)dz
				= 2r\int_{B(x,r)} V(y) dy \\
				&\leq \cK (2r) \left(\frac{r}{R/\sqrt{2}}\right)^{d-2+\delta} \int_{B(x,R/\sqrt{2})} V(y) dy \\
				&= \cK \left( \sqrt{2}\frac{r}{R} \right)^{d-1+\delta} \int_{B(x,R/\sqrt{2}) \times (t-R/\sqrt{2}, t+R/\sqrt{2})} \widetilde{V}(z) dz \\
				&\leq \cK  \left( \sqrt{2}\frac{r}{R} \right)^{d-1+\delta} \int_{B((x,t),R)} \widetilde{V}(z) dz.
			\end{align*}
			The bound is immediate if, on the other hand $\frac{R}{\sqrt 2}\leq r < R$, since
			\begin{align*}
				\int_{B((x,t),r)} \widetilde{V}(z) dz \leq \int_{B((x,t),R)} \widetilde{V}(z) dz
				\leq \left(\sqrt{2}\frac{r}{R}\right)^{d-1+\delta} \int_{B((x,t),R)} \widetilde{V}(z) dz.
			\end{align*}
			
			To show \eqref{eq:dimension reduction} we introduce the following maximal function
			\begin{align*}
					\frac{1}{m_Q(x,V)} = \sup \left\{ r >0 \ : \ \frac{1}{r^{d-2}} \int_{Q(x,r)} V(y) dy\leq \cD \right\}
			\end{align*}
			which is defined over cubes $Q(x,r)$ centered at $x$ and of sidelength $r$, rather than over balls. Clearly \eqref{eq:dimension reduction} holds true with $C=1$ for $m$ replaced by $m_Q$. Thus, we only need to show that $m$ and $m_Q$ are equivalent. For $d\leq 2$, the inclusion $Q(x,r)\subset B(x,r)$ and the positivity of $V$ yield immediately $\frac{1}{m(x,V)} \leq \frac{1}{m_Q(x,V)}$, and hence $m_Q(x,V)\leq m(x,V)$. Reciprocally, let $x\in\bbR^d$ and let $r=\frac{2}{m(x,V)}$. Then for any $R>\cK^{1/\delta}r$, 
\begin{align*}
\cD &= \frac{1}{(r/2)^{d-2}}\int_{B(x,r/2)}V(y)dy
\leq \cK \left(\frac{r}{R}\right)^\delta\frac{1}{(R/2)^{d-2}}\int_{B(x,R/2)}V(y)dy \\
&\leq  \frac{1}{R^{d-2}}\int_{Q(x,R)}V(y)dy
\end{align*}
where we used~(\ref{assump:Kato}) in the second inequality, and the fact that $B(x,R/2)\subseteq Q(x,R)$ in the third. It follows that $\frac{1}{m_Q(x,V)}\leq \cK^{1/\delta}r$ and so $m_Q(x,V)\geq \frac{m(x,V)}{2\cK^{1/\delta}}$.
		\end{proof}
		
Let $f\in L^\infty(\mathbb{R}^d)$ be compactly supported. We call $u_f$ a \emph{Lax-Milgram solution} of
\begin{equation*}
(-\Delta + V)u = f
\end{equation*}
if $u_f$ is in the form domain $\mathcal{H}$ of $H$ and 
		\begin{align*}
			\int_{\mathbb{R}^d} \left( \nabla u_f(y) \cdot\nabla v(y) + V(y) u_f(y)v(y)\right) dy = \int_{\mathbb{R}^d} f(y) v(y) dy
		\end{align*} 
		for all $v\in\mathcal{H}$. Note that
		\begin{align*}
			\mathcal{H} = \left\{ v \in H^1(\mathbb{R}^d) \ : \ \int_{\mathbb{R}^d} V(y) \vert v(y)\vert^2 dy < \infty \right\},
		\end{align*}
see \cite[Theorem 8.2.1]{Dav}.
		The following proposition yields estimates for Lax-Milgram solutions.
		
		\begin{prop}{\cite[Theorem 0.8, Theorem 2.16]{She4}}\label{Prop:LM solution}
			Let $d\geq 3$ and assume $V$ satisfies the conditions of Theorem \ref{thm:main}. For every $x\in \mathbb{R}^d$ there exists a function $\Gamma_V(x,\cdot) \in L^p_{\mathrm{loc}}(\bbR^d)$, for $1<p<d/(d-2)$, such that for all $f\in L^\infty(\mathbb{R}^d)$ with compact support and $f\geq 0$, the unique Lax-Milgram solution $u_f$ of $(-\Delta +V)u = f$ can be written as
			\begin{equation} \label{eq:integral rep}
				u_f(x) = \int_{\mathbb{R}^d} \Gamma_V(x,y) f(y) dy
			\end{equation}
			for almost every $x\in \mathbb{R}^d$.
			Furthermore, one has the kernel estimate
			\begin{equation} \label{eq:KernelEstimate}
				\frac{ce^{-\varepsilon (1+\vert x-y\vert m(x,V))^{k_0+1}}}{\vert x -y\vert^{d-2}} \leq \Gamma_V(x,y) \leq \frac{Ce^{-\varepsilon (1+\vert x-y\vert m(x,V))^{1/(k_0+1)}}}{\vert x -y\vert^{d-2}}.
			\end{equation}
		\end{prop}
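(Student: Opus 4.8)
The statement is quoted from~\cite[Theorem 0.8, Theorem 2.16]{She4}, so in practice one simply cites it; I nonetheless sketch the structure a proof would take. The plan is to split the argument into three stages: well-posedness together with the integral representation, a Gaussian-type bound expressed through an Agmon distance, and the conversion of that distance into the explicit exponents of~\eqref{eq:KernelEstimate}. The restriction $d\geq 3$ enters twice and essentially: the free Laplacian then has the fundamental solution $\Gamma_0(x,y)=c_d\vert x-y\vert^{2-d}$, and the Sobolev inequality $\Vert u\Vert_{L^{2^*}}\lesssim\Vert\nabla u\Vert_{L^2}$ (with $2^*=2d/(d-2)$) is available.

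First I would settle existence and the representation~\eqref{eq:integral rep}. The Sobolev inequality makes the Dirichlet energy coercive and turns $v\mapsto\int fv$ into a bounded functional for compactly supported $f\in L^\infty$, so Lax--Milgram produces a unique $u_f$; the solution map is the resolvent, whose Schwartz kernel I call $\Gamma_V$. Because $V\geq0$, this resolvent is positivity preserving and dominated by $(-\Delta)^{-1}$, either by the maximum principle on an exhaustion by balls or by the Feynman--Kac formula, so that $0\leq\Gamma_V(x,y)\leq\Gamma_0(x,y)=c_d\vert x-y\vert^{2-d}$. This crude bound already gives $\Gamma_V(x,\cdot)\in L^p_{\mathrm{loc}}$ exactly for $1<p<d/(d-2)$ and justifies~\eqref{eq:integral rep} by approximation.

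The substance is the sharp estimate~\eqref{eq:KernelEstimate}, which I expect to be the main obstacle. The right object is the Agmon-type distance $\rho(x,y)=\inf_\gamma\int_\gamma m(z,V)\,\vert dz\vert$, for which one aims at the clean two-sided bound $\Gamma_V(x,y)\asymp\vert x-y\vert^{2-d}e^{-\varepsilon\rho(x,y)}$, with possibly different decay constants above and below. For the upper bound I would run a weighted (Agmon) energy estimate: fixing $y$, the function $w=\Gamma_V(\cdot,y)$ solves $(-\Delta+V)w=0$ away from $y$, and testing against $e^{2\phi}w$ with $\phi$ Lipschitz and $\vert\nabla\phi\vert\leq(1-\epsilon)m(\cdot,V)$ gives
\begin{equation*}
\int\bigl(\vert\nabla(e^\phi w)\vert^2+(V-\vert\nabla\phi\vert^2)e^{2\phi}w^2\bigr)=0.
\end{equation*}
The decisive input is a Fefferman--Phong--Shen inequality $\int(\vert\nabla w\vert^2+Vw^2)\gtrsim\int m^2w^2$, which renders the bracket coercive and forces $\int_B e^{2\phi}w^2$ to decay exponentially across dyadic annuli; taking $\phi\approx\varepsilon\rho(\cdot,y)$ and upgrading to a pointwise statement by Moser iteration (legitimate because on balls of radius $1/m$ the mass of $V$ is $\mathcal{O}(1)$ by the very definition of $m$) yields the upper bound. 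The lower bound I would get from a Harnack chain: a near-optimal path is covered by $\mathcal{O}(\rho(x,y))$ balls of radius comparable to the local scale $1/m$, on each of which Harnack holds with a uniform constant, and chaining loses a fixed factor per ball.

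It remains to convert $\rho$ into the exponents of~\eqref{eq:KernelEstimate}, and here Lemma~\ref{lm:m} is exactly what is needed. Feeding the polynomial bound~\eqref{eq:upperboundm} into $\rho$ and integrating along (respectively across) the straight segment, after the substitution $t=s\,m(x,V)$, gives
\begin{equation*}
(1+\vert x-y\vert m(x,V))^{1/(k_0+1)}\lesssim\rho(x,y)\lesssim(1+\vert x-y\vert m(x,V))^{k_0+1},
\end{equation*}
where the low power $1/(k_0+1)$ comes from bounding $m$ below and the high power $k_0+1$ from bounding it above; substituting into $e^{-\varepsilon\rho}$ produces the slowly decaying exponential in the upper bound and the rapidly decaying one in the lower bound, which is precisely~\eqref{eq:KernelEstimate}. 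The genuinely hard points are the coercive Fefferman--Phong--Shen inequality behind the Agmon estimate and the uniformity of the Harnack constants across scales; both rest on the slow variation of $m(\cdot,V)$ recorded in Lemma~\ref{lm:m}, which is why in practice I would invoke~\cite[Theorem 0.8, Theorem 2.16]{She4} directly.
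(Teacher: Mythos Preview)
Your proposal is correct and aligns with the paper's approach: the paper's own proof is simply a citation to \cite[Theorem 2.16, Theorem 3.11, Remark 3.21, Theorem 4.15]{She4} together with the Lax--Milgram theorem for existence and uniqueness. Your sketch of the underlying mechanism---Lax--Milgram, domination by the free Green's function, Agmon-type exponential decay via the Fefferman--Phong--Shen inequality, Harnack chains for the lower bound, and conversion of the Agmon distance to the explicit exponents via Lemma~\ref{lm:m}---accurately reflects the content of Shen's argument and goes beyond what the paper itself records.
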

		\begin{proof}
			Existence and uniqueness of Lax-Milgram solution follows directly from the Lax-Milgram theorem on the form domain $\mathcal{H}$ equipped with its standard inner product $\langle v, w \rangle_\mathcal{H}= \langle \nabla v, \nabla w \rangle_{L^2(\mathbb{R}^d)}+ \langle \sqrt{V}v, \sqrt{V}w \rangle_{L^2(\mathbb{R}^d)}$. The representation of the Lax-Milgram solution in terms of the integral kernel $\Gamma_V$ was shown in \cite[Theorem 2.16]{She4} and the estimate in terms of the Fefferman-Phong-Shen maximal function follow from \cite[Theorem 3.11, Remark 3.21, Theorem 4.15]{She4}. 
		\end{proof}

In what follows, we will also consider \emph{weak solutions} of $(-\Delta+V)u=f$ for $f\in L_\mathrm{loc}^1(\mathbb{R}^d)$, namely a function $u_f$ such that
		\begin{align*}
			\int_{\mathbb{R}^d} u_f(y) \left(-\Delta \varphi(y) + V(y) \varphi(y)\right) dy = \int_{\mathbb{R}^d} f(y) \varphi(y) dy
		\end{align*} 
		for all $\varphi\in C_c^\infty(\mathbb{R}^d)$.
		
		We shall now construct the landscape function in infinite volume. This is an alternative and simpler approach, valid in the present setting, than that of~\cite[Theorem 1.18]{P}.
		
		\begin{prop}\label{Prop:Equivalence}
			Let $V$ be as in Theorem \ref{thm:main}. Then there exists constants $c,C>0$ depending only on $\cK, \cD,\delta$ and $d$, and a weak solution $u\in H_\mathrm{loc}^1(\mathbb{R}^d)\cap C^0(\mathbb{R}^d)$ of $(-\Delta +V) u =1$ such that
			\begin{equation} \label{eq:equivum}
				\frac{c}{m(x,V)^2} \leq u(x) \leq \frac{C}{m(x,V)^2},
			\end{equation}
			 for almost every $x\in \mathbb{R}^d$.
 \end{prop}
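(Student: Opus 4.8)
The plan is to construct $u$ as a pointwise limit of Lax-Milgram solutions $u_n := u_{f_n}$, where $f_n := \mathbbm{1}_{B(0,n)}$ is an increasing sequence of compactly supported functions converging pointwise to $1$. For dimensions $d\geq 3$, Proposition \ref{Prop:LM solution} gives the integral representation $u_n(x) = \int_{\mathbb{R}^d} \Gamma_V(x,y) f_n(y)\,dy$ together with the two-sided kernel estimate \eqref{eq:KernelEstimate}. The first task is therefore to show that the sequence $u_n(x)$ converges for almost every $x$, and that its limit satisfies the two-sided bound \eqref{eq:equivum}. Monotonicity is the natural tool: since $V\geq 0$, the kernel $\Gamma_V(x,\cdot)$ is non-negative, so $u_n(x)$ is non-decreasing in $n$, and the limit exists (possibly $+\infty$ a priori) by monotone convergence. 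To see finiteness and the correct size, I would integrate the upper kernel bound in \eqref{eq:KernelEstimate} against $f_n\leq 1$:
\begin{equation*}
u_n(x)\leq \int_{\mathbb{R}^d}\frac{Ce^{-\varepsilon(1+\vert x-y\vert m(x,V))^{1/(k_0+1)}}}{\vert x-y\vert^{d-2}}\,dy.
\end{equation*}
The exponential decay makes this integral convergent, and a change of variables $y\mapsto x + z/m(x,V)$ extracts exactly the scaling $m(x,V)^{-2}$, giving the upper bound in \eqref{eq:equivum}. The analogous computation with the lower kernel bound, restricted to the region $\vert x-y\vert\lesssim 1/m(x,V)$ where $f_n\equiv 1$ for $n$ large, yields the lower bound.

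Next I would verify that the limit $u$ is in fact a weak solution of $(-\Delta+V)u=1$. Each $u_n$ is a Lax-Milgram (hence weak) solution of $(-\Delta+V)u_n = f_n$, so for every $\varphi\in C_c^\infty(\mathbb{R}^d)$,
\begin{equation*}
\int_{\mathbb{R}^d} u_n(y)\left(-\Delta\varphi(y)+V(y)\varphi(y)\right)dy = \int_{\mathbb{R}^d} f_n(y)\varphi(y)\,dy.
\end{equation*}
On the right-hand side $f_n\varphi\to\varphi$ by dominated convergence. On the left-hand side, the uniform local bound $u_n\leq C/m(\cdot,V)^2$ (which is locally bounded by Lemma \ref{lm:m}) together with monotone or dominated convergence lets me pass to the limit and replace $u_n$ by $u$. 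This identifies $u$ as a weak solution. To upgrade to $u\in H^1_{\mathrm{loc}}$, I would test the equation against $u_n$ itself on a fixed ball to obtain a uniform bound on $\int_K(\vert\nabla u_n\vert^2 + V u_n^2)$ over compact $K$, then extract a weakly convergent subsequence whose limit must coincide with $u$; local elliptic regularity (Caccioppoli-type estimates) then secures the $H^1_{\mathrm{loc}}$ membership.

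The remaining issue is dimensions $d=1,2$, where Proposition \ref{Prop:LM solution} is not directly available. Here the plan is the dimension-reduction device already set up in part 4 of Lemma \ref{lm:m}: lift the problem to $\mathbb{R}^{d+1}$ by setting $\widetilde{V}(x,t)=V(x)$, which by that lemma again satisfies the Kato and doubling conditions and has $m((x,t),\widetilde{V})\simeq m(x,V)$. Since $d+1\geq 3$, Proposition \ref{Prop:LM solution} applies to $-\Delta_{\mathbb{R}^{d+1}}+\widetilde{V}$, producing a landscape function $\widetilde{u}$ on $\mathbb{R}^{d+1}$ with $\widetilde{u}(x,t)\simeq m((x,t),\widetilde{V})^{-2}\simeq m(x,V)^{-2}$. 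Because $\widetilde{V}$ is independent of $t$ and the right-hand side $1$ is as well, $\widetilde{u}$ can be taken $t$-independent (by uniqueness, or by averaging over $t$), and then $u(x):=\widetilde{u}(x,0)$ solves $(-\Delta_{\mathbb{R}^d}+V)u=1$ with the desired bound \eqref{eq:equivum} inherited from \eqref{eq:dimension reduction}. The main obstacle I anticipate is the convergence argument in the first paragraph: one must control $u_n$ uniformly enough to pass to the limit both in the pointwise bounds and in the weak formulation, and in particular to ensure the limit genuinely solves the equation with right-hand side $1$ rather than some measure or a relaxed version. The exponential kernel decay in \eqref{eq:KernelEstimate} is precisely what makes the constant right-hand side integrable against the Green's function, so the care lies in exploiting that decay to justify the interchange of limit and integration throughout.
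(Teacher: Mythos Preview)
Your approach is essentially identical to the paper's, and the $d\geq 3$ argument is correct as sketched. Two small points deserve care in the low-dimensional descent.

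First, for $d=1$ the lift to $\mathbb{R}^{d+1}=\mathbb{R}^2$ does not reach dimension $\geq 3$, so Proposition~\ref{Prop:LM solution} is still unavailable; you must either add two extra variables at once or iterate the descent after first establishing the $d=2$ case.

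Second, your justification that $\widetilde{u}$ is $t$-independent is not quite right. There is no uniqueness statement for \emph{weak} solutions of $(-\Delta+\widetilde{V})\widetilde{u}=1$ on all of $\mathbb{R}^{d+1}$ (one could add any decaying solution of the homogeneous equation), and ``averaging over $t$'' does not obviously produce a solution without further work. The clean route, which the paper takes, exploits the integral representation you already used: observe that $\widetilde{u}_L(x,t+\alpha)$ is the Lax-Milgram solution with right-hand side $\mathbbm{1}_{B((0,\alpha),L)}$, and hence by the kernel bound \eqref{eq:KernelEstimate} and dominated convergence both $\widetilde{u}_L(x,t+\alpha)$ and $\widetilde{u}_L(x,t)$ converge as $L\to\infty$ to the same integral $\int_{\mathbb{R}^{d+1}}\Gamma_{\widetilde{V}}((x,t),y)\,dy$. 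This gives $\widetilde{u}(x,t+\alpha)=\widetilde{u}(x,t)$ directly, after which one defines $u(x)$ as this common value and checks the weak equation on $\mathbb{R}^d$ by testing against tensor products $\varphi(x)\psi(t)$.
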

For later purposes, we immediately note that the proof of the proposition yields the following `finite volume' result. If $d\geq 3$ we denote by $u_L$ the Lax-Milgram solutions of 
			 \begin{equation} \label{eq:uL}
			 	(-\Delta +V) u_L = \mathbbm{1}_{B(0,L)}.
			 \end{equation}
Then
			 \begin{equation} \label{eq:equivuLm}
			 	\frac{c}{m(x,V)^2} \leq u_L(x) \leq \frac{C}{m(x,V)^2},
			 \end{equation}
		 		for almost every $x\in \mathbb{R}^d$. We remark that all the results work equally well if we replace the indicator function over balls by indicator function over other compact sets $\{\Omega_L:L\in\bbN\}$ such that $\Omega_L \subseteq \Omega_{\widetilde{L}}$ for $L\leq \widetilde{L}$ and $\bigcup_{L\geq 1} \Omega_L=\mathbb{R}^d$.

		\begin{proof}
			First we consider the case $d\geq 3$. Denote by $u_L$ the Lax-Milgram solution \eqref{eq:uL}
			given by Proposition~\ref{Prop:LM solution}. As $\mathbbm{1}_{B(0,L_2)}-\mathbbm{1}_{B(0,L_1)}\geq 0$ for $L_2\geq L_1$, we get from \eqref{eq:integral rep} that $(u_L)_{L\geq 1}$ is monotone increasing almost everywhere. On the other hand, it is essentially bounded since
			\begin{align*}
				0\leq u_L(x) \leq \frac{C}{m(x,V)^2} \int_{\mathbb{R}^d} \frac{e^{-\varepsilon(1+\vert y \vert)^{1/(k_0+1)}}}{\vert y \vert^{d-2}} dy = \frac{\widetilde{C}}{m(x,V)^2}
			\end{align*}
for almost every $x\in \mathbb{R}^d$ by~(\ref{eq:integral rep},\ref{eq:KernelEstimate}). Thus, we can define
			\begin{align*}
				u(x) = \lim_{L\rightarrow \infty} u_L(x).
			\end{align*}
			As $u_L$ are Lax-Milgram solutions of $(-\Delta + V)u_L = \mathbbm{1}_{B(0,L)}$, one easily checks that $u$ is a weak solution of $(-\Delta +V) u= 1$. The lower bound for $u$ follows from the lower bound in \eqref{eq:KernelEstimate}.
			
			We show now that $u\in H_\mathrm{loc}^1(\mathbb{R}^d)$ for $d\geq 3$. Fix any ball $B\subseteq \mathbb{R}^d$ and a smooth cut-off function $\chi_B \in C_c^\infty(\mathbb{R}^d)$ such that $\chi_B\equiv 1$ on $B$. As $\chi_B u_L\in \mathrm{dom}(H^{1/2})$ and $u_L$ is a Lax-Milgram solution of \eqref{eq:uL}, we get
			\begin{align*}
				\int_{\mathbb{R}^d} \nabla u_L \cdot \nabla \left( \chi_B u_L\right) + \int_{\mathbb{R}^d} V u_L (\chi_B u_L) = \int_{\mathbb{R}^d} u_L \chi_B \mathbbm{1}_{B(0,L)}.
			\end{align*}
The product rule for Sobolev functions yields
			\begin{align*}
				\int_{\mathbb{R}^d} \nabla u_L \cdot \nabla (\chi_B u_L) = \int_{\mathbb{R}^d} \vert \nabla u_L \vert^2 \chi_B + \int_{\mathbb{R}^d} \nabla u_L \cdot \left( \nabla \chi_B \right) u_L.
			\end{align*}
			Using integration by parts for the second term on the RHS yields
			\begin{align*}
				\int_{\mathbb{R}^d} \nabla u_L \cdot \left( \nabla \chi_B \right) u_L = - \int_{\mathbb{R}^d} \left( (\Delta \chi_B)u_L^2 + u_L (\nabla\chi_B) \cdot \nabla u_L \right).
			\end{align*}
			Thus, we get
			\begin{align*}
				\int_B \vert \nabla u_L \vert^2 \leq \int_{\mathbb{R}^d} \chi_B\vert \nabla u_L \vert^2 = \int_{\mathbb{R}^d} u_L \chi_B \mathbbm{1}_{B(0,L)} - \int_{\mathbb{R}^d} V \chi_B u_L^2 + \frac{1}{2} \int_{\mathbb{R}^d} (\Delta \chi_B) u_L^2.
			\end{align*}
			Hence, there exists a constant $C>0$ depending only on the dimension such that
			\begin{align*} 
				\int_B (\vert\nabla u_L \vert^2 + V u_L^2) \leq \int_{2B} u_L + C\int_{2B} u_L^2 
			\end{align*}
 for all balls $B$ and all $L>0$.
			By \eqref{eq:equivuLm} and \eqref{eq:upperboundm} we get that $(\nabla u_L)_{L\geq 1}$ is uniformly bounded in $L^2(B)$ for fixed $B$. Therefore, by Banach-Alaoglu, there exists a subsequence $u_{L_k}$ converging weakly to some $g_B\in L^2(\mathbb{R}^d)$. One readily checks that $g_B$ is the weak derivative of $u$ and hence $u\in H_\mathrm{loc}^1(\mathbb{R}^d)$.
			
			Next we consider the case $d=2$. For this we use Hadamard's method of descent. Recall that $\widetilde{V}(x,t)=V(x)$ for $(x,t) \in \mathbb{R}^2 \times \mathbb{R}$. By Lemma \ref{lm:m} the function $\widetilde{V}$ satisfies (\ref{assump:Kato},\ref{assump:doubling}), and therefore the first part yields a weak solution $\widetilde{u}$ of $(-\Delta + \widetilde{V}) \widetilde{u} =1$ on $\mathbb{R}^3$. Let $\alpha \in \mathbb{R}$. One readily checks that $\widetilde{v_L}(x,t)=\widetilde{u_L}(x,t+\alpha)$ is a Lax-Milgram solution of $(-\Delta +\widetilde{V}) \widetilde{v_L} = \mathbbm{1}_{B((0,0,\alpha),L)}$. Thus, by \eqref{eq:integral rep}, we have
			\begin{align*}
				\widetilde{u_L}(x,t+\alpha) = \int_{\mathbb{R}^3} \Gamma_{\widetilde{V}}((x,t), y) \mathbbm{1}_{B((0,0,\alpha),L)}(y) dy.
			\end{align*}
			As $\Gamma_{\widetilde{V}}((x,t), \cdot) \in L^1(\mathbb{R}^3)$ by \eqref{eq:KernelEstimate}, we get by dominated convergence			
			\begin{align*}
				\widetilde{u}(x,t+\alpha) = \lim_{L\rightarrow \infty} \widetilde{u_L}(x, t+\alpha) = \int_{\mathbb{R}^3} \Gamma_{\widetilde{V}}((x,t), y) dy
				= \widetilde{u}(x,t).  
			\end{align*}
			Hence, for almost every $x\in \mathbb{R}^2$ there exists $C_x$ such that for almost every $t\in \mathbb{R}$ we have $\widetilde{u}(x,t)=C_x$ and we define $u$ on $\bbR^2$ by $u(x) = C_x$. Let $\varphi\in C_c^\infty(\mathbb{R}^2)$ and $\psi\in C_c^\infty(\mathbb{R})$ with $\int_\mathbb{R} \psi(t) dt=1$. Then, as $\int_\mathbb{R} \psi''(t) dt =0$ and $\widetilde{u}$ is a weak solution of $(-\Delta + \widetilde{V}) \widetilde{u}=1$ on $\mathbb{R}^3$, we get
			\begin{align*}
				\int_{\mathbb{R}^2} u(x) (-\Delta +V(x))\varphi(x) dx
				&= \int_{\mathbb{R}^3} \widetilde{u}(x,t) (-\Delta +\widetilde{V}(x,t)) (\varphi(x) \psi(t)) dx dt \\
				&= \int_{\mathbb{R}^3} \varphi(x) \psi(t) dt = \int_{\mathbb{R}^2} \varphi(x)dx.
			\end{align*}
			Therefore, $(-\Delta +V) u=1$ is a weak solution on $\mathbb{R}^2$. The inequality \eqref{eq:equivum} follows from \eqref{eq:dimension reduction}. As shown before, we have $\widetilde{u}\in H_\mathrm{loc}^1(\mathbb{R}^3)$ and $\vert \nabla u(x)\vert =\vert \nabla \widetilde{u}(x,t)\vert$ as $u(x,t)$ is independent of $t$. Hence $u\in H_\mathrm{loc}^1(\mathbb{R}^2)$.
					
			The case $d=1$ follows similarly as the case $d=2$.
			
			Finally, continuity follows from \cite[Corollary 1.5]{LS}.
		\end{proof}
	
We point out that the weak solution $u$ constructed above does  in general not belong to the form domain of $H$, and we will therefore often have to work with the Lax-Milgram solution $u_L$ instead of $u$. If $V$ is a polynomial, the maximal function $m(\cdot,V)$ is equivalent to the function introduced in~\cite{Sm,Zh} 
 \begin{equation} \label{SmithZhongMaxFunction}
     M(x,V) = \sum_{\alpha \in \mathbb{N}_0^n} \vert \partial^\alpha V(x) \vert^{1/(\vert \alpha \vert +2)},
 \end{equation}
see~(\ref{eq:Mequivm}) below. The sum is of course finite for a polynomial. We now consider $V(x) = \vert x \vert^2$ on $\mathbb{R}^d$. Then $M(x,V)$ is comparable to $1+\vert x \vert$
and hence, by \eqref{eq:equivum} and \eqref{eq:Mequivm}, $u(x)$ is comparable to $(1+\vert x\vert)^{-2}$ which is not square integrable for $d > 2$. In Lemma \ref{lm:formdomain} we show that the landscape function, for polynomial potentials, belongs to the form domain if and only if the landscape function is integrable.

		The equivalence of the landscape function and the Fefferman-Phong-Shen maximal function exhibited in Proposition~\ref{Prop:Equivalence} allows one to prove a Harnack inequality for the landscape function, see also \cite[Corollary 1.38]{P} for the case $d\geq 3$. 
		
		\begin{cor}
			Let $V$ be as in Theorem \ref{thm:main}. Then there exists a constant $\cH\geq 1$, depending only on $\cK, \cD,\delta$ and $d$, such that for almost every $x\in \mathbb{R}^d$ and almost every $y\in Q(x,2\sqrt{u(x)})$ we have
			
			\begin{equation} \label{eq:Harnack}
				\cH^{-1} u(x) \leq u(y) \leq \cH u(x).
			\end{equation}
		\end{cor}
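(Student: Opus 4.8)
The plan is to reduce the statement to the Harnack-type inequality for the Fefferman-Phong-Shen maximal function established in part 2 of Lemma \ref{lm:m}, using the two-sided comparison $\frac{c}{m(x,V)^2} \leq u(x) \leq \frac{C}{m(x,V)^2}$ from Proposition \ref{Prop:Equivalence}. Throughout, let $N \subseteq \mathbb{R}^d$ denote the null set off which the bounds \eqref{eq:equivum} hold; all statements below are made for $x \notin N$ and $y \in Q(x, 2\sqrt{u(x)}) \setminus N$, which is exactly the almost-everywhere quantification in the claim. Note that for a fixed $x\notin N$ the value $u(x)$, and hence the cube $Q(x,2\sqrt{u(x)})$, is a well-defined object, so restricting $y$ to the full-measure subset $Q(x, 2\sqrt{u(x)})\setminus N$ is legitimate.

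First I would translate the geometric hypothesis $y \in Q(x, 2\sqrt{u(x)})$ into a bound of the form required by \eqref{eq:Harnackm}. Since $Q(x, 2\sqrt{u(x)})$ is the cube of sidelength $2\sqrt{u(x)}$ centered at $x$, any $y$ in it satisfies $|x - y| \leq \sqrt{d}\,\sqrt{u(x)}$. Invoking the upper bound $u(x) \leq C/m(x,V)^2$ from \eqref{eq:equivum}, this gives $|x - y| \leq \sqrt{dC}/m(x,V) =: C'/m(x,V)$, with $C'$ depending only on $d$ and the constant $C$ from Proposition \ref{Prop:Equivalence} (hence only on $\cK, \cD, \delta, d$).

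Next I would apply part 2 of Lemma \ref{lm:m} with this value of $C'$ to obtain a constant $C_1$, depending only on $\cK, \cD, \delta, d$, such that $C_1^{-1} m(x,V) \leq m(y,V) \leq C_1 m(x,V)$. Feeding this back into the two-sided comparison \eqref{eq:equivum} for both $x$ and $y$ then yields the claim: the upper bound $u(y) \leq C/m(y,V)^2 \leq C C_1^2/m(x,V)^2$ combined with the lower bound $u(x) \geq c/m(x,V)^2$ gives $u(y) \leq (C C_1^2/c)\, u(x)$, and symmetrically $u(y) \geq c/m(y,V)^2 \geq (c/(C_1^2 C))\, u(x)$ using $u(x)\leq C/m(x,V)^2$. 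Setting $\cH = \max\{1,\, C C_1^2/c\}$, which depends only on $\cK, \cD, \delta, d$, proves \eqref{eq:Harnack}.

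I do not anticipate a serious obstacle here, as the argument is a clean chaining of two already-established facts. The only points requiring care are bookkeeping the chain of constants to confirm the stated dependence of $\cH$, and checking that the almost-everywhere quantifiers compose correctly, which is the reason for the remark above that the cube is well-defined once $x\notin N$ is fixed.
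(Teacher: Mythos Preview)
Your proof is correct and follows exactly the same approach as the paper, which simply states that the corollary follows immediately from \eqref{eq:Harnackm} and \eqref{eq:equivum}. You have spelled out the constant-chasing and the a.e.\ quantifiers carefully, but the underlying argument is identical.
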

		\begin{proof}
			This follows immediately from \eqref{eq:Harnackm} and \ref{eq:equivum}.
		\end{proof}

		\section{Proof of Theorem \ref{thm:main}} \label{sect:bathtub}

In this section we show that we can estimate the rank $\mathcal{N}^V(\mu)$ of the spectral projection of $H$ in terms of the measure of the sublevel set $\mathcal{V}(\mu)$ of the effective potential~$\frac{1}{u}$, both defined in Theorem \ref{thm:main}.

For this we introduce two types of coarse-grained volumes. A box of sidelength $\ell$ is a set of the form $\times_{i=1}^d[a_i,b_i]$ where $b_i-a_i = \ell$. For any $\ell>0$, we consider a collection $\caQ_\ell$ of boxes of sidelength $\ell$ such that $\bigcup_{Q\in\caQ_\ell}Q = \bbR^d$ and $\mathring Q \cap \mathring Q' = \emptyset$ whenever $Q\neq Q'$. We define for any $\mu>0$
\begin{align*}
	N(\mu) = \left\vert \left\{ Q\in \caQ_{\mu^{-1/2}} \ : \ \esi \, \frac{1}{u} \leq \mu \right\}\right\vert
\end{align*}
and
\begin{align*}
	n(\mu) = \left\vert \left\{ Q\in \caQ_{\mu^{-1/2}} \ : \ \ess \, \frac{1}{u} \leq \mu \right\}\right\vert,
\end{align*}
where $\operatorname{ess} \inf, \operatorname{ess} \sup$ denote the essential infimum, respectively the essential supremum.

For the class of potentials considered here, namely those satisfying the Kato-type and doubling conditions, both coarse-grained volumes are directly related to the measure $\caV(\mu)$ of the sublevel set.

\begin{lemma} 
	Let $V$ satisfy the conditions of Theorem~\ref{thm:main}. Then 
	\begin{align*} 
		n(\mu) \leq \mu^{d/2} \caV(\mu) \leq N(\mu) \leq n(\cH \mu)
	\end{align*}
	for all $\mu \in \mathbb{R}$.
\end{lemma}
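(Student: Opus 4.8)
My plan is to treat the statement as essentially measure-theoretic: the only analytic input is the Harnack inequality \eqref{eq:Harnack}, and everything else will follow from inclusions between the tiles and the sublevel set $S_\mu:=\{x\in\bbR^d:1/u(x)\le\mu\}$ together with the pairwise disjointness of the interiors of the tiles of each $\caQ_\ell$. Since $u>0$ almost everywhere by the lower bound in \eqref{eq:equivum}, the effective potential $1/u$ is positive and finite a.e.; hence for $\mu\le 0$ one has $\caV(\mu)=0$ and $N(\mu)=n(\mu)=0$, so the chain is trivial and I may assume $\mu>0$. I abbreviate $\ell=\mu^{-1/2}$, so that each tile of $\caQ_\ell$ has volume $\ell^d=\mu^{-d/2}$; correspondingly the middle quantity is $\mu^{d/2}\caV(\mu)$.

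For the two outer inequalities no hypothesis on $V$ beyond $u>0$ is needed. If a tile $Q\in\caQ_\ell$ is counted by $n(\mu)$, the essential supremum of $1/u$ over $Q$ is at most $\mu$, so $Q\subseteq S_\mu$ up to a null set; as these tiles have disjoint interiors, summing their volumes gives $n(\mu)\,\mu^{-d/2}=\big|\bigcup Q\big|\le\caV(\mu)$, i.e. $n(\mu)\le\mu^{d/2}\caV(\mu)$. For the middle inequality I would decompose $\caV(\mu)=\sum_{Q\in\caQ_\ell}|S_\mu\cap Q|$: whenever $|S_\mu\cap Q|>0$ the essential infimum of $1/u$ over $Q$ is at most $\mu$, so $Q$ is counted by $N(\mu)$, and bounding each term by $|S_\mu\cap Q|\le\ell^d=\mu^{-d/2}$ yields $\caV(\mu)\le N(\mu)\,\mu^{-d/2}$, that is $\mu^{d/2}\caV(\mu)\le N(\mu)$.

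The remaining inequality $N(\mu)\le n(\cH\mu)$ is the heart of the lemma and is where \eqref{eq:Harnack} enters. Fix a tile $Q\in\caQ_{\mu^{-1/2}}$ counted by $N(\mu)$. Since the essential infimum of $1/u$ over $Q$ is at most $\mu$, there is a point $x_Q\in Q$ (outside a null set) with $u(x_Q)\ge\mu^{-1}$. Applying \eqref{eq:Harnack} at $x_Q$ gives $u\ge\cH^{-1}u(x_Q)\ge(\cH\mu)^{-1}$ a.e. on the cube $Q(x_Q,2\sqrt{u(x_Q)})$, whose sidelength is at least $2\mu^{-1/2}=2\sqrt{\cH}\,(\cH\mu)^{-1/2}$. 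Thus $1/u\le\cH\mu$ a.e. on a cube whose sidelength is at least twice that of the tiles of $\caQ_{(\cH\mu)^{-1/2}}$; since an axis-parallel cube of sidelength $\ge 2s$ contains a full tile of a tiling into cubes of sidelength $s$, this cube contains a tile $Q'\in\caQ_{(\cH\mu)^{-1/2}}$ whose essential supremum of $1/u$ is at most $\cH\mu$, i.e. a tile counted by $n(\cH\mu)$. Assigning to each such $Q$ a companion tile $Q'$, and using that the $Q$ have pairwise disjoint interiors, I would conclude $N(\mu)\le n(\cH\mu)$.

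The main obstacle is making this last assignment injective. The two tilings sit at the incommensurate scales $\mu^{-1/2}$ and $(\cH\mu)^{-1/2}=\mu^{-1/2}/\sqrt{\cH}$ and are not aligned, and the Harnack cubes attached to different $N(\mu)$-tiles may overlap; taking, say, $Q'$ to be the fine tile containing $x_Q$ is immediate but only boundedly-many-to-one, giving $N(\mu)\le C_d\,n(\cH\mu)$ with a dimensional constant. To recover the stated constant I would recast the selection as a bipartite matching between the $N(\mu)$-tiles and the $n(\cH\mu)$-tiles, joining $Q$ to every fine tile contained in its Harnack cube, and verify Hall's condition: for every family $A$ of $N(\mu)$-tiles, the fine tiles lying inside $\bigcup_{Q\in A}Q(x_Q,2\sqrt{u(x_Q)})$ number at least $|A|$. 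This is precisely where the slack created by passing to the finer scale (the factor $\sqrt{\cH}\ge 1$ between the sidelengths, together with each Harnack cube having sidelength at least $2(\cH\mu)^{-1/2}$ and hence already containing a full fine tile) must be shown to dominate the boundary-layer loss incurred because fine tiles straddling $\partial\{1/u\le\cH\mu\}$ cannot be used. Once this counting estimate is established, Hall's theorem furnishes the injection and closes the chain.
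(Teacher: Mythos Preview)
Your treatment of the first two inequalities matches the paper's: both are immediate measure-theoretic consequences of the inclusions between the $n(\mu)$-boxes, the $N(\mu)$-boxes, and the sublevel set. (You also correctly read the middle term as $\mu^{d/2}\caV(\mu)$; the paper's own proof writes $n(\mu)/\mu^{d/2}$, confirming that the exponent in the statement is a typo.)

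For $N(\mu)\le n(\cH\mu)$ the paper takes a one-line route quite different from yours: if $Q\in\caQ_{\mu^{-1/2}}$ is counted by $N(\mu)$, then by \eqref{eq:Harnack} one has $\operatorname{ess\,sup}_Q\,1/u\le\cH\mu$, and the paper simply concludes $N(\mu)\le n(\cH\mu)$. No matching, no Hall's theorem. In particular the paper does \emph{not} engage with the scale mismatch you flagged---as written, $N(\mu)$ counts boxes in $\caQ_{\mu^{-1/2}}$ while $n(\cH\mu)$ counts boxes in the finer collection $\caQ_{(\cH\mu)^{-1/2}}$, so the implicit ``same box now satisfies the $n$-condition'' reading compares families of different objects. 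Your observation is correct, and the paper's argument is at best elliptical on this point.

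That said, your Hall-matching program is overkill and, as you yourself acknowledge, incomplete; verifying Hall's condition with the precise constant $\cH$ would be delicate and you do not carry it out. The content actually needed downstream (Theorem~\ref{thm:main}) is only that $N(\mu)\le C\,n(C\mu)$ for \emph{some} $C=C(d,\cH)$, and for that your ``boundedly-many-to-one'' remark already suffices: the Harnack fact $\operatorname{ess\,sup}_Q\,1/u\le\cH\mu$ gives $N(\mu)\mu^{-d/2}\le\caV(\cH\mu)$, which can then be compared with $n$ at a possibly larger multiple of $\mu$. So drop the marriage theorem, state and prove the version with an unspecified constant, and note that this is all the main theorem consumes.
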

\begin{proof}
	The first two inequalities are immediate as, up to null sets, $n(\mu)/\mu^{d/2}$ is the measure of all boxes that are strictly contained in the sublevel set $\{1/u \leq \mu \}$ and $N(\mu)/\mu^{d/2}$ is the measure of all the boxes that intersect the sublevel set.
	
	Let $Q$ be a box such that $\inf_Q \, 1/u \leq \mu$, then by \eqref{eq:Harnack} we have $\sup_Q \, 1/u \leq \cH \mu$. Hence, $N(\mu) \leq n(\cH \mu)$. 
\end{proof}

We now turn to the proof of the main theorem, namely the bounds \eqref{eq:MainBathtub}. Our arguments are variational and adapted from the proofs of~\cite{DFM}, which are themselves inspired by Fefferman-Phong \cite{FP}. We start with the upper bound.

\begin{lemma}
	\label{lma:UpperBound}
	Let $V$ satisfy the conditions of Theorem~\ref{thm:main}. Then
	\begin{align*}
		\mathcal{N}^V(\mu) \leq N(C\mu)
	\end{align*}
	for all $C>\max\{ 2, \frac{2d}{\pi^2}\}$ and all $\mu\in \mathbb{R}$.
\end{lemma}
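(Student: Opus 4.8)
The plan is to establish the upper bound $\caN^V(\mu)\leq N(C\mu)$ by a variational argument, using the characterization of $\caN^V(\mu)$ as the maximal dimension of a subspace of the form domain on which the quadratic form $q(v)=\int(|\nabla v|^2 + V|v|^2)$ stays below $\mu\|v\|^2$. The strategy is contrapositive in spirit: I want to show that if $v$ is supported (or essentially concentrated) on a box $Q\in\caQ_{\mu^{-1/2}}$ on which $\esi\,\frac1u > C\mu$, then $q(v) > \mu\|v\|_{L^2}^2$, so that no such box can host a low-energy test function. Counting only the ``bad'' boxes, i.e.\ those with $\esi_Q\,\frac1u\leq C\mu$, then bounds the rank of the spectral projection by $N(C\mu)$.

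The key steps, in order, are as follows. First I would reduce the lower bound on the quadratic form to a box-by-box estimate: for a function $v$ supported in a single box $Q$ of sidelength $\ell=\mu^{-1/2}$, use a Poincaré/uncertainty-type inequality on $Q$. On a cube of sidelength $\ell$ with, say, Neumann conditions, the spectral gap gives $\int_Q|\nabla v|^2 \geq \frac{\pi^2}{\ell^2}\|v-\bar v\|_{L^2(Q)}^2$ where $\bar v$ is the mean; more relevantly, if $v$ must be orthogonal to constants or if we retain the full Dirichlet gradient energy, we get a term of order $\frac{\pi^2}{\ell^2}=\pi^2\mu$ times $\|v\|^2$. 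Second, I would bound the potential contribution from below using the effective potential: the unitary equivalence $-\Delta+V\cong -\frac{1}{u^2}\mathrm{div}\,u^2\nabla + \frac1u$ suggests that on a box where $\frac1u$ is large (at least $C\mu$ everywhere, by the $\esi$ condition), the term $\int \frac1u |w|^2$ in the transformed coordinates dominates, contributing at least $C\mu\|v\|^2$. The constant threshold $C>\max\{2,\frac{2d}{\pi^2}\}$ strongly signals that the kinetic term yields $\pi^2\mu/d$ per coordinate direction (hence $\frac{2d}{\pi^2}$ is calibrated so that $\frac{\pi^2}{2d}\cdot C>1$) while the factor $2$ handles the splitting of energy between kinetic and potential parts, presumably via a Cauchy–Schwarz or $2ab\leq a^2+b^2$ step. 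Third, I would assemble these into $q(v)\geq (\text{something strictly exceeding }\mu)\|v\|^2$ on any good box, then pass from single-box test functions to a general subspace by a max-min / partition argument over $\caQ_{\mu^{-1/2}}$: decomposing the domain into boxes and using that the form $q$ is bounded below by the direct sum of its box-localized versions (with Neumann conditions, by domain monotonicity of Neumann forms), so the total count of negative or sub-$\mu$ directions is at most the number of bad boxes, which is $N(C\mu)$.

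I expect the main obstacle to be the passage from the effective-potential heuristic to a rigorous lower bound on $\int_Q V|v|^2$ that is usable box-by-box, because $V$ itself and $\frac1u$ are only comparable in an averaged (maximal-function) sense, not pointwise. The clean quantity is $\frac1u$, which via Proposition~\ref{Prop:Equivalence} is comparable to $m(x,V)^2$, and the $\esi$ in the definition of $N(\mu)$ is tailored to this. So the delicate part is to perform the estimate in the transformed picture using the weight $u^2\,dx$ and the groundstate substitution $v=uw$, where the identity $\int(|\nabla v|^2+V v^2) = \int u^2|\nabla w|^2 + \int \frac1u v^2$ (which follows from the landscape equation $(-\Delta+V)u=1$ after integration by parts) converts the potential energy exactly into $\int\frac1u\,v^2$. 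This identity is what makes the coarse-graining unnecessary and must be justified carefully for the Lax–Milgram solution $u_L$ (which, unlike $u$, lies in the form domain), with the box localization and Neumann spectral gap then applied to the $\int u^2|\nabla w|^2$ term; controlling the boundary and cross terms in this substitution, and handling $\mu\leq 0$ where the right-hand side $N(C\mu)$ is zero (forcing $\caN^V(\mu)=0$, i.e.\ positivity of $H$), are the technical points I would need to get right.
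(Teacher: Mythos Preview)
Your proposal has the two essential ingredients right: the landscape inequality $q(v)\geq\int\frac{1}{u}|v|^2$ coming from the ground-state substitution $v=uw$ (justified through $u_L$, as you say), and the Poincar\'e inequality with optimal constant $\frac{\pi^2}{d}\ell^{-2}$ on cubes of sidelength $\ell$. Your reading of the constants $2$ and $\frac{2d}{\pi^2}$ is also correct. Where your proposal diverges from the paper is in the assembly step, and this is not merely cosmetic.

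The paper does \emph{not} use Neumann bracketing. It uses the min-max principle in the codimension form: to obtain $\mathcal{N}^V(\mu)\le N$ one exhibits a subspace of $\mathrm{dom}(H^{1/2})$ of codimension at most $N$ on which $q(v)>\mu\|v\|^2$. That subspace is
\[
\mathcal{H}_N=\Bigl\{v\in\mathrm{dom}(H^{1/2}):\int_Q v=0\ \text{for every }Q\in\mathcal{F}\Bigr\},
\]
where $\mathcal F$ is the collection of bad boxes, so the codimension is exactly $|\mathcal F|=N(C\mu)$. The paper first proves the \emph{global} inequality $q(v)\ge\int_{\mathbb{R}^d}\frac{1}{u}|v|^2$ and then uses it in the form
\[
2q(v)\ \ge\ \int_{\mathbb{R}^d}|\nabla v|^2+\int_{\mathbb{R}^d}\frac{1}{u}|v|^2,
\]
paying a factor of $2$ to recover the \emph{unweighted} gradient alongside the effective potential. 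This sum is split over boxes: on boxes $Q\notin\mathcal F$ one drops the gradient and uses $\frac{1}{u}>C\mu$; on boxes $Q\in\mathcal F$ one drops the potential and applies the unweighted Poincar\'e inequality, available precisely because $v\in\mathcal H_N$ has mean zero there.

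Your Neumann-bracketing route runs into a genuine obstacle at this last stage. After bracketing you must bound the Neumann form $\int_Q(|\nabla w|^2+V|w|^2)$ for \emph{arbitrary} $w\in H^1(Q)$, not merely for restrictions of global form-domain elements; but the landscape identity on a single box, with the global $u$, produces a boundary term $\int_{\partial Q}(\partial_\nu u)\,w^2/u$ whose sign is uncontrolled. If instead you keep the global identity $q(v)=\int u^2|\nabla w|^2+\int\frac{1}{u}v^2$ and attempt a box-wise Poincar\'e on the weighted kinetic term $\int_Q u^2|\nabla w|^2$ (as your last paragraph proposes), the constant you obtain depends on the oscillation of $u$ over $Q$, i.e.\ on $\cH$, and you will not recover the clean threshold $\max\{2,\frac{2d}{\pi^2}\}$ asserted in the lemma. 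The doubling trick that restores $\int|\nabla v|^2$, together with mean-zero constraints rather than Neumann conditions, is exactly what sidesteps both difficulties.
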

\begin{proof}
	In order to have that $\mathcal{N}^V(\mu) \leq N$ it suffices, by the Min-Max Principle (see \cite[Theorem XIII.2]{RS4}), to exhibit a subspace $\mathcal{H}_N\subseteq \mathrm{dom}(H^{1/2})$ with codimension at most $N$ such that
	\begin{align*}
		\int_{\mathbb{R}^d} \left( \vert \nabla v \vert^2 + V \vert v \vert^2 \right) > \mu \int_{\mathbb{R}^d} \vert v \vert^2
	\end{align*}
	for all $v \in \mathcal{H}_N$. Let $\mathcal{F}$ be the collection of boxes such that 
	\begin{align*}
		\mathcal{F} = \left\{ Q\in \caQ_{(C\mu)^{-1/2}} \ : \ \esi \, \frac{1}{u} \leq C\mu \right\},
	\end{align*}
	where $C>0$ will be chosen later, and let
	\begin{align*}
		\mathcal{H}_N = \left\{ v \in \mathrm{dom}(H^{1/2}) \ : \  \int_Q v =0 \quad \forall Q\in \mathcal{F} \right\}.
	\end{align*}
	Since the cubes are disjoint, the codimension of $\mathcal{H}_N$ is equal to $\vert \mathcal{F}\vert =N(C\mu)$. 
	
	First we want to show that
	
	\begin{equation} \label{eq:lowerbnd}
		\langle (-\Delta +V) \varphi, \varphi \rangle_{L^2(\mathbb{R}^d)} \geq \langle \frac{1}{u} \varphi, \varphi \rangle_{L^2(\mathbb{R}^d)}
	\end{equation}
	for all $\varphi\in C_c^\infty(\mathbb{R}^d)$. We start by considering $d\geq 3$. Denote by $u_L$ the Lax-Milgram solution of \eqref{eq:uL}. By \eqref{eq:upperboundm}, \eqref{eq:equivuLm} we know that $1/u_L \in L_\mathrm{loc}^\infty(\mathbb{R}^d) \cap H_\mathrm{loc}^1(\mathbb{R}^d)$, using the chain rule for Sobolev functions \cite[Theorem 6.16]{LL}. This readily implies that $\vert \varphi\vert^2/u_L$ is in the form domain of $H$ for all $\varphi\in C_c^\infty(\mathbb{R}^d)$. As $u_L$ is a Lax-Milgram solution of \eqref{eq:uL}, we get
	\begin{align*}
		\int_{\mathbb{R}^d} \left( \nabla u_L \cdot \nabla \left(\frac{\vert \varphi\vert^2}{u_L}\right) + V u_L  \frac{\vert\varphi\vert^2}{u_L} \right) = \int_{\mathbb{R}^d} \frac{\mathbbm{1}_{B(0,L)}}{u_L} \vert \varphi \vert^2.
	\end{align*}
	Furthermore, using the product rule \cite[Lemma 7.4]{LL} yields
	\begin{align*}
		\nabla u_L \cdot \nabla (\vert \varphi \vert^2/u_L) = \vert \nabla \varphi \vert^2 - u_L^2 \vert \nabla (\varphi/u_L)\vert^2.
	\end{align*}
	Combining the last two equalities and taking $L\rightarrow \infty$ implies \eqref{eq:lowerbnd} for $d\geq 3$.
	
	For $d\leq 2$ we set $\widetilde{V}(x,t) = V(x)$ for all $(x,t)\in \mathbb{R}^d \times \mathbb{R}^{3-d}$ and denote by $\widetilde{u}$ the landscape function of $\widetilde{V}$. Let $\varphi\in C_c^\infty(\mathbb{R}^d)$, $\psi\in C_c^\infty(\mathbb{R}^{3-d})$ with $\int_{\mathbb{R}^{3-d}} \psi(t) dt =1$ and $(\varphi \otimes \psi)(x,t) = \varphi(x) \psi(t)$ for all $(x,t)\in \mathbb{R}^d \times \mathbb{R}^{3-d}$. Then we have by the previous computations for $d=3$
	\begin{align*}
		\langle (-\Delta + V) \varphi, \varphi \rangle_{L^2(\mathbb{R}^d)}
		&= \langle (-\Delta + \widetilde{V}) (\varphi \otimes \psi), \varphi \otimes \psi \rangle_{L^2(\mathbb{R}^3)}  \\
		&\geq \langle \frac{1}{\widetilde{u}} (\varphi \otimes \psi), \varphi \otimes \psi \rangle_{L^2(\mathbb{R}^3)}
		= \langle \frac{1}{u} \varphi, \varphi \rangle_{L^2(\mathbb{R}^d)}.
	\end{align*}

	The bound \eqref{eq:lowerbnd} extends, for all $d\geq 1$, by density of $C_c^\infty(\mathbb{R}^d )$ in the form domain of $H$ (see \cite[Theorem 8.2.1.]{Dav}) to all $v\in \mathrm{dom}(H^{1/2})$.
	This implies that
	\begin{align*}
		2\int_{\mathbb{R}^d} \left( \vert \nabla v\vert^2 + V \vert v \vert^2 \right)
		\geq \int_{\mathbb{R}^d} \left( \vert \nabla v\vert^2 + \frac{1}{u} \vert v \vert^2 \right)
	\end{align*}
	for all $v\in \mathrm{dom}(H^{1/2})$. With this, the statement of the lemma follows from the claim that if $v\in \mathcal{H}_N\setminus \{0\}$, then
	\begin{align*}
		\int_{\mathbb{R}^d} \left( \vert \nabla v\vert^2 + \frac{1}{u} \vert v \vert^2 \right) > 2\mu \int_{\mathbb{R}^d} \vert v \vert^2.
	\end{align*}
	We check this inequality using the partition into boxes. In any box $Q\notin \mathcal{F}$, we simply use the  bound $\min_Q 1/u > C\mu$. If $Q\in \mathcal{F}$, we recall that the integral of $v$ vanishes and use the Poincar\'e inequality with optimal constant $\frac{\pi^2}{d}(C\mu)$ since the boxes have sidelength $(C\mu)^{-1/2}$, see~\cite{PW}. Hence, the claimed lower bound holds for all $C>\max\{2,\frac{2d}{\pi^2}\}$ indeed.
	\end{proof}

Next we turn to the lower bound in \eqref{eq:MainBathtub}.

\begin{lemma}\label{lma:LowerBound}
	Let $V$ satisfy the conditions of Theorem~\ref{thm:main}. Then
	\begin{align*}
		n(\mu) \leq \mathcal{N}^V\left((1+(4\cH)^2)\mu\right)
	\end{align*}
	for all $\mu\in\bbR$. 
\end{lemma}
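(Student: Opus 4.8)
The plan is to prove a lower bound on the counting function $\caN^V(\mu)$ by exhibiting a large subspace on which the quadratic form $\langle H v, v\rangle$ is controlled from above by $C\mu\|v\|^2$; by the Min-Max Principle, if we produce a subspace of dimension $n(\mu)$ inside $\mathrm{dom}(H^{1/2})$ on which the Rayleigh quotient is at most $C\mu$, then $\caN^V(C\mu)\geq n(\mu)$. The natural candidates are trial functions supported on the boxes $Q\in\caQ_{\mu^{-1/2}}$ satisfying $\ess\,\tfrac{1}{u}\leq\mu$, i.e.\ the $n(\mu)$ boxes counted by the essential-supremum coarse-grained volume. On such a box, $1/u\leq\mu$ almost everywhere, so the potential energy is controlled, while the kinetic energy of a bump function on a box of sidelength $\mu^{-1/2}$ is itself of order $\mu$.

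The key steps, in order, are as follows. First I would fix, for each admissible box $Q$, a trial function $\varphi_Q$ (for instance a smooth bump, or the ground-state-type function for that box) supported in $Q$, and note that these have pairwise disjoint supports, hence are linearly independent and span a subspace of dimension exactly $n(\mu)$. Second, I would estimate the Rayleigh quotient on each $\varphi_Q$: the potential term $\int V|\varphi_Q|^2$ must be handled, and here is where the relationship between $V$ and the effective potential $1/u$ enters. One cannot directly bound $\int V|\varphi_Q|^2$ by $\mu\int|\varphi_Q|^2$, since $V$ can be much larger than $1/u$ pointwise; instead I expect to use the Lax-Milgram solution $u_L$ and the identity $(-\Delta+V)u_L=\mathbbm{1}_{B(0,L)}$ to convert the potential energy into the effective potential via the ground-state substitution $v=u_L w$, exactly as in the upper bound but run in reverse. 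Third, using the Harnack inequality \eqref{eq:Harnack} with constant $\cH$, I would control the oscillation of $u$ (hence of $1/u$) across the box of sidelength $2\sqrt{u(x)}$; this is why the admissible condition $\ess\,1/u\leq\mu$ on a box of sidelength $\mu^{-1/2}$ is compatible with Harnack, and why the final constant takes the form $C>1+(4\cH)^2$ — the factor $4\cH$ reflecting the interplay between the box sidelength $\mu^{-1/2}$ and the Harnack scale $\sqrt{u}$.

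Concretely, for a box $Q$ with $\ess_Q\,1/u\leq\mu$, I would write a trial function of the form $v_Q=u_L\,\chi_Q$ (or $u_L$ times a normalized bump on $Q$), compute
\begin{align*}
\int_{\bbR^d}\left(|\nabla v_Q|^2+V|v_Q|^2\right)
=\int_{\bbR^d}\frac{\mathbbm{1}_{B(0,L)}}{u_L}|v_Q|^2+\int_{\bbR^d}u_L^2\left|\nabla\frac{v_Q}{u_L}\right|^2,
\end{align*}
using the same product-rule computation as in Lemma \ref{lma:UpperBound}. The first term is bounded by $(\ess_Q\,1/u_L)\|v_Q\|^2$, which by \eqref{eq:equivuLm} and \eqref{eq:equivum} is comparable to $\mu\|v_Q\|^2$; the second term is the kinetic energy of the profile $v_Q/u_L=\chi_Q$, which on a box of sidelength $\mu^{-1/2}$ contributes another term of order $\mu\|v_Q\|^2$ once the weight $u_L^2$ is controlled via Harnack.

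The main obstacle I anticipate is the estimate of the kinetic term $\int u_L^2|\nabla(v_Q/u_L)|^2$: one must choose the profile so that its gradient is supported where the weight $u_L^2$ does not vary too wildly, and here the Harnack inequality \eqref{eq:Harnack} is essential to replace $u_L(y)$ by $u_L(x)$ up to the factor $\cH$ across the box, at the cost of slightly enlarging the box and inflating the constant. Ensuring that the admissible boxes are genuinely contained in a region where Harnack applies at the correct scale — reconciling the box sidelength $\mu^{-1/2}$ with the intrinsic Harnack scale $2\sqrt{u(x)}\sim 2\mu^{-1/2}$ on that box — is the delicate point, and it is precisely what forces the constant $C>1+(4\cH)^2$ rather than something closer to $1$.
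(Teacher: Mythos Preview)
Your proposal is correct and follows the same strategy as the paper: trial functions of the form $\chi_Q\cdot(\text{landscape})$ supported on the admissible boxes, the ground-state identity coming from the landscape equation to convert $\int(|\nabla v_Q|^2+V|v_Q|^2)$ into a $\frac{1}{u}$-term plus $\int u^2|\nabla\chi_Q|^2$, and then Harnack to bound $\int_Q u^2\leq\cH^2\int\chi_Q^2 u^2=\cH^2\|v_Q\|^2$. The only substantive difference is that the paper uses $u$ itself rather than $u_L$, so that $\ess_Q\tfrac{1}{u}\leq\mu$ enters with coefficient exactly $1$ and the final constant is precisely $1+\tilde C^2\cH^2$ with $\tilde C>4$ the gradient bound for a bump equal to $1$ on $Q/2$ and vanishing off $Q$; your route through $u_L$ and the comparison \eqref{eq:equivuLm}--\eqref{eq:equivum} would yield a larger constant than the stated $1+(4\cH)^2$, and your reading of the $4$ as a Harnack-scale matching factor is off --- it is purely the cutoff gradient constant.
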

\begin{proof}

	For a lower bound $N\leq \mathcal{N}^V(C\mu)$ it suffices, again by the Min-Max Principle, to find a subspace $\mathcal{H}_N \subseteq \mathrm{dom}(H^{1/2})$ of dimension at least $N$ such that 
	\begin{align*}
	\int_{\mathbb{R}^d} \left( \vert \nabla v \vert^2 + V \vert v \vert^2 \right) \leq C\mu \int_{\mathbb{R}^d} \vert v \vert^2.
	\end{align*}
	We define
	\begin{align*}
		\mathcal{F} = \left\{ Q \in \mathcal{Q}_{\mu^{-1/2}} \ : \ \ess \, \frac{1}{u} \leq \mu \right\}.
	\end{align*}
	Furthermore, for a box $Q$ we pick $\chi_Q \in H^1(\mathbb{R}^d)$ with $0\leq \chi_Q \leq 1, \Vert \nabla \chi_Q \Vert_{L^\infty(\mathbb{R}^d)} \leq 4\mu^{1/2}$, $\chi_Q \equiv 1$ on $Q/2$ and $\chi_Q\equiv 0$ on $\mathbb{R}^d\setminus Q$ (a possible choice for $\chi_Q$ is to interpolate linearly from $\partial (Q/2)$ to $\partial Q$). Since the functions $\chi_Q u$ are non-zero and orthogonal to each other, the space
	\begin{align*}
		\mathcal{H}_N = \mathrm{span}\{ \chi_Q u \ : \ Q \in \mathcal{F} \}
	\end{align*}
	is of dimension $\vert \mathcal{F} \vert = n(\mu)$. 
	
	By Proposition \ref{Prop:Equivalence} we have $u\in H_\mathrm{loc}^1(\mathbb{R}^d)\cap L_\mathrm{loc}^\infty(\mathbb{R}^d)$ and thus $\chi_Q u$ is in the form domain of $H$. Using the product rule for Sobolev function \cite[Lemma 7.4]{LL} and the fact that $u$ solves the landscape equation  we get for all $\varphi, \psi \in C_c^\infty(\mathbb{R}^d)$ 
	\begin{align*}
		\langle \nabla(&\psi u), \nabla \varphi \rangle_{L^2(\mathbb{R}^d)}
		+ \langle \psi u, V \varphi \rangle_{L^2(\mathbb{R}^d)} \\
		&= \langle \nabla u, \nabla(\psi \varphi) \rangle_{L^2(\mathbb{R}^d)} + \langle u, V\psi \varphi \rangle_{L^2(\mathbb{R}^d)} - \langle \nabla u, (\nabla \psi) \varphi \rangle_{L^2(\mathbb{R}^d)} + \langle (\nabla \psi) u, \nabla \varphi \rangle_{L^2(\mathbb{R}^d)} \\
		&= \langle \psi, \varphi \rangle_{L^2(\mathbb{R}^d)}  - \langle \nabla u, (\nabla \psi) \varphi \rangle_{L^2(\mathbb{R}^d)} + \langle (\nabla \psi) u, \nabla \varphi \rangle_{L^2(\mathbb{R}^d)}.
	\end{align*}
	Now we pick a sequence $(\varphi_n)_{n\in \mathbb{N}} \subseteq C_c^\infty(\mathbb{R}^d)$ such that $\mathrm{supp}(\varphi_n) \subseteq 2Q$, $\sup_n \Vert \varphi_n \Vert_{L^\infty(\mathbb{R}^d)}<\infty$ and $\varphi_n \rightarrow \chi_Q u$ in $H^1(\mathbb{R}^d)$ and a similar approximation $\psi_n \rightarrow \chi_Q u$ and we get
	\begin{align*}
		\int_{\mathbb{R}^d} \left( \vert \nabla(\chi_Q u)\vert^2 + V \chi_Q^2 u^2 \right)
		= \int_{\mathbb{R}^d} \left( \chi_Q^2 u + \vert \nabla \chi_Q \vert^2 u^2 \right)
	\end{align*}
	and in turn
	\begin{align*} 
		\int_{\mathbb{R}^d} \left( \vert \nabla (\chi_Q u) \vert^2 + V \chi_Q^2 u^2 \right)
		& \leq \left(\ess \, \frac{1}{u} \right) \int_{Q} \chi_Q^2 u^2 + 4^2 \mu  \int_{Q} u^2 \\
		& \leq \mu \left( \int_{Q} \chi_Q^2 u^2 + 4^2\int_{Q} u^2 \right). 
	\end{align*}
	Now, (\ref{eq:Harnack}) implies that
	\begin{equation*}
		\int_Q u^2 \leq \vert Q \vert \, \ess \, u^2 \leq  \vert Q \vert \, \cH^2 \, \underset{Q/2}{\operatorname{ess} \inf} \, u^2 
		\leq  \cH^2 \int_{Q/2} u^2
		\leq \cH^2 \int_{\mathbb{R}^d} \chi_Q^2u^2,
	\end{equation*}
	where the last inequality follows from the properties of $\chi_Q$. This yields the claim we had set to prove.
\end{proof}

Together, Lemmas~\ref{lma:UpperBound} and \ref{lma:LowerBound} yield the claim of Theorem~\ref{thm:main}. Finally we prove Corollary \ref{cor:confining}.

\begin{proof}[Proof of Corollary~\ref{cor:confining}]
	If $u$ vanishes at infinity, i.e. $\limsup_{R\rightarrow \infty} \sup_{\mathbb{R}^d\setminus B(0,R)} u =0$, then each sublevel set of $1/u$ is bounded up to a null set and 
	thus $H$ has discrete spectrum by \eqref{eq:MainBathtub}. Assume on the other hand that $u$ does not vanish at infinity. There is $\mu>0$ and a sequence of points $(x_n)_{n\geq 1}$ such that $\lim_{n\rightarrow \infty} \vert x_n \vert=\infty$ and $\liminf_{\varepsilon\rightarrow 0^+} \inf_{B(x_n,\varepsilon)} \, u\geq \frac{\cH}{\mu}$ for all $n$. Then by \eqref{eq:Harnack} we have
	\begin{equation*}
		\bigcup_{n\geq 1} Q\left(x_n, 2\sqrt{\cH/\mu}\right) \subseteq \{ x\in \mathbb{R}^d \ : \ 1/u(x) \leq \mu \}
	\end{equation*}
	and hence, by \eqref{eq:MainBathtub}, the spectrum of $H$ is not discrete.
\end{proof}		

		\section{The case of polynomial potentials} \label{sect:cor}

When the potential $V$ is a polynomial, as in the original setting of Fefferman-Phong, one can obtain more precise information of the landscape function. We start by giving the proof for Corollary \ref{cor:Polynomials}.

\begin{proof}[Proof of Corollary~\ref{cor:Polynomials}]
	Since the addition of a constant does not change the structure of the spectrum, we assume that the polynomial satisfies $V\geq 1$. We check first that these polynomials satisfy \eqref{assump:Kato} and \eqref{assump:doubling}.
	Condition \eqref{assump:Kato} holds with $\delta=2$ due to the inequality
	\begin{align*}
		c \sup_{B(x,r)} V \leq \frac{1}{\vert B(x,r)\vert} \int_{B(x,r)} V(y) dy \leq \sup_{B(x,r)} V,
	\end{align*}
	where $c$ can be chosen to depend only on $d$ and the total degree of $V$, but neither $x$ nor $r$. The upper bound is immediate. It is enough to show the lower bound for $r=1$ and $x=0$ by scaling and translation. In that case, the claim follows from the fact that the space of all polynomials in $d$ variables and total degree at most $D$ is a finite dimensional vector space and thus all norms are equivalent.
	
	For the same reason and since polynomials are analytic functions, there exists a constant $C>0$ depending only on $d$ and $D$ such that 
	\begin{align*}
		\int_{B(0,2)}  V(y)  dy \leq C \int_{B(0,1)}  V(y)  dy,
	\end{align*} 
	which implies doubling after rescaling and translation. In particular, $\cD$ can be chosen to only depend on $d$ and $D$.
	
	Now, Corollary~\ref{cor:confining} and~(\ref{eq:equivum}) imply that the spectrum of $H$ is discrete if and only if $\lim_{\vert x\vert \rightarrow \infty} m(x,V)=\infty$. For polynomials the Fefferman-Phong-Shen maximal function $m(x,V)$ is in fact equivalent to $M(x,V)$ introduced in~(\ref{SmithZhongMaxFunction}), in the sense that 
	\begin{equation} \label{eq:Mequivm}
		cM(x,V) \leq m(x,V) \leq C M(x,V).
	\end{equation}		
	The equivalence was already noted in~\cite{She2} and we provide a proof below for completeness, see Lemma~\ref{lma:M-m}.
	
With these preliminaries, we can now turn to the central claim of the corollary. If one of the directional derivative vanishes, then $-\Delta+V$ is unitarily equivalent (via a suitable rotation) to $-\Delta+W$ where $\partial_1 W \equiv 0$. In this case $M((t,0, \dots, 0),W)=M(0, W)$, which implies by the remarks above that the spectrum of $-\Delta+W$ is not discrete and hence also the spectrum of $-\Delta+V$ is not discrete.

Next we are going to show that if $-\Delta+V$ does not have discrete spectrum, then some directional derivative of $V$ vanishes identically. As $-\Delta + V$ does not have discrete spectrum we must have that
\begin{equation} \label{M0}
	\liminf_{\vert x \vert \rightarrow \infty} M(x,V) =: M_0 <\infty.
\end{equation}
We consider the semi-algebraic set
\begin{align*}
	A= \left\{ x\in \mathbb{R}^d \ \vert \ \forall \alpha \in \mathbb{N}^d \ : \ (\partial^\alpha V (x))^2 < 2 M_0^{2(2+\vert \alpha \vert)} \right\}
\end{align*}
and the polynomial function
\begin{align*}
	F: \mathbb{R}^d \rightarrow \mathbb{R}^{(D+1)^d}, \quad x\mapsto (\partial^\alpha V(x)^2 )_{\alpha\in [0,D]^d \cap \mathbb{Z}^d}.
\end{align*}
Now \eqref{M0} implies that $A$ is an unbounded set and we can therefore pick a sequence $(x^{(n)})_{n\in \mathbb{N}}\subseteq A$ such that $\vert x^{(n)}\vert \rightarrow \infty$ and
\begin{align*}
	\lim_{n \rightarrow \infty} F(x^{(n)}) =: y =(y_\alpha)_{\alpha\in [0,D]^d \cap \mathbb{Z}^d}
\end{align*} 
with $\vert y_\alpha \vert \leq M_0^{2(2+\vert \alpha \vert)}$.

Next we would like to pass from a mere sequence to an analytic curve. This is done by the following curve selection lemma at infinity.
\begin{lemma}{\cite[Lemma 2.17]{HN}}
	Let $A\subset \mathbb{R}^d$ be a semi-algebraic set, and let $F: \mathbb{R}^d \rightarrow \mathbb{R}^N$ be a semi-algebraic map. Assume that there exists a sequence $(x^{(n)})_{n\in \mathbb{N}} \subset A$ such that $\lim_{n\rightarrow \infty} \vert x^{(n)} \vert = \infty$ and $\lim_{n \rightarrow \infty} F(x^{(n)})=y\in (\mathbb{R}\cup \{\pm \infty\})^N$. Then there exists an analytic curve $\gamma: (0, \delta)\rightarrow A$ of the form
	\begin{equation} \label{gamma}
		\gamma(t) = \sum_{j=-m}^\infty a^{(j)} t^j
	\end{equation} 
	such that $a^{(-m)}\in \mathbb{R}^N \setminus\{0\}$, $m\in \mathbb{Z}_{>0}$ and $\lim_{t\rightarrow 0^+} F(\gamma(t))=y$.
\end{lemma}
Let $\gamma$ be a curve as given by the previous lemma. We would like to say that $V$ remains constant along $\gamma$ and thus get a direction in which the gradient of $V$ vanishes identically. However, analytic functions can remain bounded on an unbounded set without being constant. Thus, we truncate the series \eqref{gamma} at $j=0$, thereby obtaining a polynomial approximation of the curve $\gamma$, and $F$ will still remain bounded along the truncation.
\begin{lemma}\label{lemma:perturbation of polynomial}
	For every $\varepsilon>0$ there exists $C>0$ such that for all $v\in \mathbb{R}^d$ with $\vert v \vert<\varepsilon$ we have for all $t\in (0, \delta/2)$
	\begin{equation}
		0\leq V(\gamma(t)+v) \leq C.
	\end{equation}
\end{lemma}
\begin{proof}
	By Taylor's theorem, $V(\gamma(t)+v) = \sum_{\alpha\in \mathbb{N}^d} \frac{v^\alpha}{\alpha!} (\partial^\alpha V)(\gamma(t))$.	The claim follows from the fact that $\vert(\partial^\alpha V)(\gamma(t))\vert$ are all uniformly bounded for $t\in (0,\delta/2)$.
\end{proof}
With this, we define polynomial function 
\begin{align*}
	G(s) = \sum_{j=0}^m a^{(-j)} s^j.
\end{align*}
Note that every component of $G$ is single variable polynomial. For every $\varepsilon>0$ there exists $0<\delta_\varepsilon<\delta$ such that
\begin{align*}
	\vert \gamma(t) - G(t^{-1}) \vert <\frac{\varepsilon}{2}
\end{align*}
for all $t\in (0,\delta_\varepsilon)$, and hence by the previous lemma
\begin{align*}
	0\leq V(G(t^{-1})) \leq C.
\end{align*}
Let now
\begin{align*}
	P(s,x_1, \dots, x_d) = V\left( G(s) + x \right).
\end{align*}
For $s>\frac{1}{\delta_\varepsilon}$ and $\vert x \vert<\frac{\varepsilon}{2}$ we get
\begin{align*}
	0\leq P(s, x_1, \dots, x_d) \leq C
\end{align*}
again by Lemma~\ref{lemma:perturbation of polynomial}. Now, for any $\vert x \vert < \frac{\varepsilon}{2}$, the function $s\mapsto P(s,x)$ is a polynomial that is bounded on an unbounded interval and thus constant. Therefore, $\partial_s P(s,x)=0$ on $\bbR\times B_{\varepsilon/2}(0)$. By the identity theorem we get that $\partial_s P(s,x)=0$ on $\mathbb{R}^{d+1}$. But
\begin{align*}
	0=\partial_s P(s,x) = (\nabla V)(G(s)+x) \cdot G'(s).
\end{align*}
As $G$ is not constant, there is $s_0\in \mathbb{R}$ such that $G'(s_0)\neq 0$ and so the derivative of $V$ in direction $G'(s_0)$ vanishes identically.
\end{proof}

	As mentioned before, the landscape function will not belong to the form domain of $H$. For polynomial potentials, there is an easy criterion to check whether $u\in \mathrm{dom}(H^{1/2})$.
	
	\begin{lemma} \label{lm:formdomain}
		Let $V\geq 0$ be a non-zero polynomial. Then $u\in L^1(\mathbb{R}^d)$ if and only if $u\in \mathrm{dom}(H^{1/2})$.
	\end{lemma}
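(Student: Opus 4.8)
The plan is to reduce both implications to the single energy identity
\[
\int_{\mathbb{R}^d}\bigl(|\nabla u|^2 + V u^2\bigr) = \int_{\mathbb{R}^d} u,
\]
read in the appropriate one-sided sense: the left-hand side is finite precisely when $u\in\mathrm{dom}(H^{1/2})=\mathcal{H}$, while the right-hand side is finite precisely when $u\in L^1(\mathbb{R}^d)$ (recall $u\geq 0$). Two preliminary facts set this up. First, since $V$ is a non-zero polynomial, some derivative $\partial^\alpha V$ is a non-zero constant, so $M(x,V)\geq c>0$ uniformly; by \eqref{eq:Mequivm} and \eqref{eq:equivum} this gives $u\in L^\infty(\mathbb{R}^d)$. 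Second, since $u$ is a weak solution lying in $H^1_{\mathrm{loc}}(\mathbb{R}^d)$ by Proposition~\ref{Prop:Equivalence}, integration by parts recasts the weak equation as $\int(\nabla u\cdot\nabla\varphi + Vu\varphi)=\int\varphi$ for $\varphi\in C_c^\infty(\mathbb{R}^d)$, and this extends, by $H^1$-density and the local boundedness of the polynomial $V$, to every $\varphi\in H^1(\mathbb{R}^d)$ with compact support.

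For the implication $u\in\mathrm{dom}(H^{1/2})\Rightarrow u\in L^1$, I would fix cutoffs $\chi_n\in C_c^\infty$ with $\chi_n\equiv 1$ on $B(0,n)$, $\mathrm{supp}\,\chi_n\subseteq B(0,2n)$ and $\|\nabla\chi_n\|_{L^\infty}\leq C/n$, and test the extended identity with $\varphi=\chi_n u$; this is legitimate because $u\in\mathcal{H}$ means $u\in H^1$ with $\int Vu^2<\infty$, so $\chi_n u$ is an admissible compactly supported $H^1$ function. Expanding $\nabla(\chi_n u)=\chi_n\nabla u + u\nabla\chi_n$ and letting $n\to\infty$, the diagonal terms converge by monotone convergence to $\int|\nabla u|^2$ and $\int Vu^2$, the right-hand side to $\int u$, and the cross term is bounded by $(C/n)\|u\|_{L^2}\|\nabla u\|_{L^2}\to 0$ (here $u\in H^1\subseteq L^2$). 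This produces the exact identity above; its finite left-hand side forces $\int u<\infty$.

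For the converse $u\in L^1\Rightarrow u\in\mathrm{dom}(H^{1/2})$, I first note $u\in L^1\cap L^\infty\subseteq L^2$. Here I do not yet know that $\nabla u\in L^2$ globally, so I would instead test with $\varphi=\chi_n^2 u$ and use Young's inequality to absorb the cross term $2\int\chi_n u\,(\nabla\chi_n)\cdot\nabla u$ into $\tfrac12\int\chi_n^2|\nabla u|^2$, leaving a remainder bounded by $2\|\nabla\chi_n\|_{L^\infty}^2\|u\|_{L^2}^2\to 0$. Since $\int\chi_n^2|\nabla u|^2\leq\int_{B(0,2n)}|\nabla u|^2<\infty$ for each fixed $n$, the absorption is valid, and passing to the limit by monotone convergence gives $\tfrac12\int|\nabla u|^2+\int Vu^2\leq\int u<\infty$. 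Hence $\nabla u\in L^2$ and $\int Vu^2<\infty$, which together with $u\in L^2$ places $u$ in $\mathcal{H}=\mathrm{dom}(H^{1/2})$.

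I expect the only real obstacle to be technical: justifying the non-smooth test functions $\chi_n u$ and $\chi_n^2 u$ in the weak formulation and checking that the limit exchanges (monotone convergence on the diagonal and right-hand terms, vanishing of the cross term) are legitimate. This is exactly where $u\in H^1_{\mathrm{loc}}\cap L^\infty$ and the local boundedness of $V$ are used. The Caccioppoli-type absorption in the converse direction is the single point at which a genuine estimate, rather than the bare identity, is required.
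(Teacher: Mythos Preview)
Your argument is correct and follows essentially the same route as the paper: both proofs reduce the equivalence to the energy identity $\int(|\nabla u|^2+Vu^2)=\int u$, established by testing the landscape equation against cutoffs and passing to the limit. The paper first invokes elliptic regularity to treat $u$ as a classical solution and multiplies by $u\varphi$, obtaining a $\int u^2\Delta\varphi$ remainder, whereas you stay within the weak formulation and test against $\chi_n u$ (respectively $\chi_n^2 u$ with a Caccioppoli absorption); your derivation of $u\in L^\infty$ directly from $M(x,V)\geq c>0$ is also slightly more direct than the paper's appeal to the dichotomy in the proof of Corollary~\ref{cor:Polynomials}, but these are minor technical variations rather than a different strategy.
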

	\begin{proof}
		As $V$ is smooth, we get by standard elliptic regularity theory that the landscape function is a classical solution of the landscape equation. Multiplying the landscape equation by $\varphi\in C_c^\infty(\mathbb{R}^d)$ and integrating yields after integration by parts
		\begin{align*} 
			\int_{\mathbb{R}^d} \varphi \vert \nabla u \vert^2 + \int_{\mathbb{R}^d} Vu^2 \varphi
			= \int_{\mathbb{R}^d} u \varphi + \int_{\mathbb{R}^d} u^2 (\Delta \varphi).
		\end{align*}
		We saw in the proof of Corollary \ref{cor:Polynomials} that either $\lim_{\vert x \vert \rightarrow \infty} u(x) =0$ or that there is one spatial coordinate along which $u$ is constant. Hence, if $u\in L^1(\mathbb{R}^d)$, then $u$ vanishes at infinity and automatically $u\in L^2(\mathbb{R}^d)$. However, then we can choose a sequence of $\varphi_n\in C_c^\infty(\mathbb{R}^d)$ converging to $1$ and obtain by monotone convergence
		\begin{align*}
			\int_{\mathbb{R}^d}\left( \vert \nabla u \vert^2 + Vu^2\right)
			\leq  \int_{\mathbb{R}^d} u + C\int_{\mathbb{R}^d} u^2 <\infty
		\end{align*}
		and therefore $u\in \mathrm{dom}(H^{1/2})$.
		
		On the other hand, if $u\in \mathrm{dom}(H^{1/2})$, then $u\in H^1(\mathbb{R}^d)$ and in particular $u\in L^2(\mathbb{R}^d)$. Thus, we can take again a suitable sequence of test functions to obtain by dominated convergence
		\begin{align*}
			\int_{\mathbb{R}^d} u = \int_{\mathbb{R}^d} \vert \nabla u \vert^2 + \int_{\mathbb{R}^d} Vu^2 - \int_{\mathbb{R}^d} u^2 <\infty,
		\end{align*}
		namely $u\in L^1(\bbR^d)$.
	\end{proof}
		
Let us now return to the concrete example of Simon's potential $V(x,y)=x^2 y^2$. First of all, we can now prove that the corresponding landscape function is in the form domain of $-\Delta + V$. Combining Proposition \ref{Prop:Equivalence} and Lemma \ref{lma:M-m}, it is enough to check that $M(\cdot,x^2y^2)^{-2} \in L^1(\mathbb{R}^2)$. An explicit calculation yields $M((x,y), x^2y^2)\geq \vert x y \vert + \sqrt{\vert x\vert} + \sqrt{\vert y \vert}+1$, and thus its inverse is indeed square integrable in $\bbR^2$. Similarly, we obtain the following two-sided estimate on the effective potential:
\begin{equation*}
	c \left(x^2 y^2 + \vert x \vert + \vert y \vert+1  \right) \leq \frac{1}{u(x,y)} \leq C\left(x^2 y^2 + \vert x \vert + \vert y \vert+1  \right).
\end{equation*}
This yields for $\mu$ sufficiently large
\begin{align*}
	\frac{\mu}{3C} \log\left(\frac{\mu}{3C}\right) = \left\vert \left\{ (x,y) \in \mathbb{R}^2 \ : \ 1\leq x \leq \frac{\mu}{3C}, 0\leq y \leq \frac{\mu}{3Cx} \right\} \right\vert
	\leq \caV(\mu).
\end{align*}
On the other hand, we have
\begin{align*}
	\caV(\mu) &\leq 1 + 4 \left\vert \left\{ (x,y)\in \left[1,\frac{\mu}{c}\right]\times \mathbb{R}_{\geq 0} \ : \ x^2y^2+\vert x \vert +\vert y\vert+1 \leq \frac{\mu}{c} \right\} \right\vert \\
	&\leq 1 + 4 \left\vert \left\{ (x,y)\in \left[1,\frac{\mu}{c}\right]\times \mathbb{R}_{\geq 0} \ : \ \vert xy \vert \leq \frac{\mu}{c} \right\} \right\vert \\
	&= 1+ \frac{4\mu}{c} \log\left( \frac{\mu}{c} \right).
\end{align*}
The combination of those two estimates with Theorem~\ref{thm:main} recovers Simon's asymptotics \cite[Theorem 1.4]{Si2}
\begin{equation*}
	\caN^{x^2 y^2}(\mu) = \frac{1}{\pi} \mu^{3/2} \log(\mu) + o(\mu^{3/2} \log(\mu)).
\end{equation*}
up to multiplicative constants.

We conclude this section with a proof of the equivalence of the functions $m(\cdot,V)$ and $M(\cdot,V)$ in the case of polynomials. We point out that the arguments in this section show that in the case of polynomials, the constants appearing in Theorem~\ref{thm:main}, and \emph{a fortiori} the Harnack constant, depend only on the spatial dimension and the degree of the polynomial.
\begin{lemma}\label{lma:M-m}
	Let $V\geq 0$ a polynomial on $\mathbb{R}^d$ of total degree $D\geq 0$. Then there exist constants $C,c>0$ depending only on $d,D$ such that 
	\begin{equation} \label{eq:Mequivm}
		cM(x,V) \leq m(x,V) \leq C M(x,V),
	\end{equation}
	where $m(\cdot,V), M(\cdot,V)$ were defined in \eqref{eq:defm(x,V)} and~(\ref{SmithZhongMaxFunction}).
\end{lemma}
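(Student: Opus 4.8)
The plan is to compare both maximal functions to the single quantity $\Psi(r) := \sum_{|\alpha|\le D} |\partial^\alpha V(x)|\, r^{|\alpha|+2}$ and then read off $m$ and $M$ from the behaviour of $\Psi$. Writing $\phi(r) := r^{2-d}\int_{B(x,r)} V(y)\,dy$, so that $1/m(x,V) = \sup\{r>0 : \phi(r)\le 1\}$, I would first establish the pointwise equivalence $\phi(r) \asymp \Psi(r)$ with constants depending only on $d$ and $D$. The first ingredient is the average--supremum comparison already proved for polynomials in the course of Corollary~\ref{cor:Polynomials} (the case $\delta=2$ of the Kato condition), namely $c\sup_{B(x,r)}V \le |B(x,r)|^{-1}\int_{B(x,r)}V \le \sup_{B(x,r)}V$. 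Since $|B(x,r)| \asymp r^d$, this yields $\phi(r) \asymp r^2 \sup_{B(x,r)} V$.

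The second, and main, ingredient is the comparison $\sup_{B(x,r)} V \asymp \sum_{|\alpha|\le D}|\partial^\alpha V(x)|\,r^{|\alpha|}$. Because $V\ge 0$ we have $\sup_{B(x,r)}V = \sup_{B(x,r)}|V|$, so it suffices to treat the polynomial $P(z):=V(x+z)$ of degree $\le D$ and show $\sup_{|z|\le r}|P(z)| \asymp \sum_\alpha |c_\alpha|\, r^{|\alpha|}$, where $c_\alpha=\partial^\alpha V(x)/\alpha!$ are its Taylor coefficients at the origin (a finite, exact expansion for a polynomial). After the dilation $z\mapsto rz$ this reduces to the statement that, on the finite-dimensional space $\mathcal{P}_D$ of polynomials of degree $\le D$ in $d$ variables, the coefficient $\ell^1$-norm and the norm $P\mapsto\sup_{|w|\le 1}|P(w)|$ are equivalent; equivalence of norms on a finite-dimensional space gives constants depending only on $d$ and $D$. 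Restoring the factors $r^{|\alpha|}$ and using $\alpha!\asymp 1$ for $|\alpha|\le D$ gives the claim, and combining with the first ingredient yields $\phi(r)\asymp\Psi(r)$.

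It then remains to extract $m(x,V)\asymp M(x,V)$ from $\phi\asymp\Psi$. Set $r_0 := 1/M(x,V)$. Writing $\Psi(r_0) = \sum_\alpha \big(|\partial^\alpha V(x)|^{1/(|\alpha|+2)}/M(x,V)\big)^{|\alpha|+2}$, every summand lies in $[0,1]$ because $|\partial^\alpha V(x)|^{1/(|\alpha|+2)}\le M(x,V)$, while the single term attaining $\max_\alpha|\partial^\alpha V(x)|^{1/(|\alpha|+2)}$ is bounded below since $M(x,V)$ is at most the (bounded) number of terms times this maximum; hence $\Psi(r_0)\asymp 1$ with $d,D$-constants. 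Finally $\Psi$ dilates at least quadratically, $\Psi(\lambda r)\ge \lambda^2\Psi(r)$ for $\lambda\ge1$ and $\Psi(\lambda r)\le\lambda^2\Psi(r)$ for $\lambda\le 1$, since each exponent $|\alpha|+2\ge 2$. Through $\phi\asymp\Psi$ this monotone quadratic growth pins the crossing level $\{\phi=1\}$ to a fixed multiplicative distance from $r_0$: if $\phi(r)\le1$ then $\Psi(r)\lesssim 1$ and the quadratic lower bound forces $r\lesssim r_0$, giving $m\gtrsim M$; conversely a suitably small multiple $\lambda r_0$ satisfies $\phi(\lambda r_0)\le1$, giving $m\lesssim M$.

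The main obstacle is the second ingredient: a priori the supremum (or integral) of $V$ over a ball could be far smaller than $\sum_\alpha|\partial^\alpha V(x)|r^{|\alpha|}$ because of sign cancellations between the monomials of $V$ re-expanded about $x$. The resolution is precisely the combination of the nonnegativity of $V$---which replaces $\sup V$ by $\sup|V|$ and, through the averaging inequality, also rules out cancellation inside the integral---with the finite-dimensional norm equivalence on $\mathcal{P}_D$, which shows that no cancellation can depress the supremum below the $\ell^1$-norm of the coefficients. All constants produced this way depend only on $d$ and $D$, as required.
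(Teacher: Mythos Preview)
Your argument is correct and rests on the same two ingredients as the paper's proof: the equivalence of norms on the finite-dimensional space of polynomials of degree at most $D$ (which gives both the average--supremum comparison and the Taylor-coefficient comparison), together with the scaling behaviour of the two maximal functions. The organization differs, however. The paper normalizes twice---once setting $M(0,V)=1$ to prove $m\gtrsim M$, and once setting $m(0,V)=1$ to prove $m\lesssim M$---and in the second direction invokes the Smith estimate $M(y,V)\le CM(x,V)(1+|x-y|M(x,V))^{D/2}$. Your route is more uniform: you prove the single equivalence $\phi(r)\asymp\Psi(r)$ and then read off both inequalities from the at-least-quadratic monotone growth of $\Psi$, which pins the crossing level $\{\phi=1\}$ to a bounded multiplicative neighbourhood of $r_0=1/M(x,V)$. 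This packaging is arguably cleaner and avoids quoting Smith's lemma separately, since your Taylor/norm-equivalence step already contains it.
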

\begin{proof} By translating the potential, we can pick $x=0$. Furthermore, for all $\lambda>0$ we have $m(x, V_\lambda) =  \lambda m(\lambda x,V)$ and $M(x, V_\lambda) =  \lambda M(\lambda x,V)$, where $V_\lambda(x) = \lambda^2 V(\lambda x)$. Hence, we can assume that $M(0,V)=1$ and then need to show that there exists $c>0$ depending only on $d,D$ such that $m(0,V)\geq c$.
	
	Since $V\mapsto \partial^\alpha V$ is a linear map on a finite dimensional space
	\begin{equation*}
		\vert \partial^\alpha V(0) \vert \leq \sup_{x\in B(0,1)} \vert \partial^\alpha V(x) \vert \leq c \sup_{x\in B(0,1)} \vert V(x) \vert,
	\end{equation*} 
	and so
	\begin{align*}
		1 = M(0,V) &\leq \sum_{\alpha \in \mathbb{N}_0^n} \Big(c\sup_{x\in B(0,1)} V(x)\Big)^{1/(\vert \alpha \vert+2)} \\
		&\leq c \Big( \Big(\sup_{x\in B(0,1)} V(x)\Big)^{1/2} + \Big(\sup_{x\in B(0,1)} V(x)\Big)^{1/(D+2)} \Big).
	\end{align*}
	Hence, we have $\sup_{x\in B(0,1)} V(x) \geq c>0$. For $r\geq 1$ we get
	\begin{align*}
		c r^2 \leq r^2 \sup_{x\in B(0,r)} V(x) \leq \frac{c}{r^{n-2}} \int_{B(0,r)} V(y) dy.
	\end{align*}
	Recall that $\cD$ can be chosen to only depend on $d,D$, therefore the right hand side is greater than $\cD$ for $r$ large enough, which yields an upper bound on $\frac{1}{m(0,V)}$. Hence, 
	\begin{align*}
		m(0,V) \geq c = c M(0,V).
	\end{align*}
	
	We turn to the lower bound. First of all, a simple Taylor expansion yields (see~\cite[Lemma 2.5]{Sm})
	\begin{equation*}
		\vert \partial^\alpha V(y) \vert\leq C M(x,V)^{\vert \alpha \vert+2} \left( 1+ \vert x -y \vert M(x,V) \right)^D,
	\end{equation*}
	so that
	\begin{equation} \label{eq:m}
		M(y,V) \leq C M(x,V)(1+\vert x -y \vert M(x,V))^{D/2},
	\end{equation}
	for all $x,y\in \mathbb{R}^d$. Thus, if $m(0,V)=1$ then
	\begin{align*}
		\cD^{-(2+D)} m(0,V)^{2+D} &= \int_{B(0,1)} V(y)dy \leq  \vert B(0,1)\vert \sup_{y\in B(0,1)} V(y)\\
		& \leq \vert B(0,1)\vert \sup_{y\in B(0,1)} M(y,V)^2 \leq C^2 M(0,V)^{2+D},
	\end{align*}
	by~(\ref{eq:m}), which again yields the desired estimate by translating and rescaling.
\end{proof}

\section{The potential well}
In this section we explicitly compute the landscape function for potential wells $\varepsilon \mathbbm{1}_{B(0,\delta)^c}$, where $B(0,\delta)^c=\mathbb{R}^d\setminus B(0,\delta)$ and $\varepsilon, \delta>0$. We shall observe first of all that the minimum of the effective potential properly reflects the value of the bottom of the spectrum in the sense that both are of order $\varepsilon$ as $\varepsilon\to0$. Secondly, we will see that the estimates of the main theorem are not tight enough to distinguish the difference between $d=1,2$, where an eigenvalue is present for all $\varepsilon>0$, and $d\geq3$ where this is not the case.

We start by observing that the landscape function corresponding to the spherical well are radially symmetric. Indeed, all the Lax-Milgram solutions \eqref{eq:uL} are invariant under rotation of the first $d$ variables and thus the landscape function, given as a pointwise limit of those solutions, shares the same symmetry. Passing to spherical coordinates we see that the radial part $f(\vert x \vert) = u(x)$ solves the ODE
\begin{align*}
	-f''(r)-\frac{d-1}{r} f'(r) + \varepsilon \mathbbm{1}_{[\delta, \infty)}(r) f(r) =1
\end{align*}
on $(0,\infty)$. The general solution of this ODE on $(0,\delta)$ is given, for $d\neq 2$, by
\begin{align*}
	f(r) = -\frac{r^2}{2d} + a_1 + \frac{a_2}{r^{d-2}},
\end{align*}
respectively by the same expression with $r^{-(d-2)}$ replaced by $\log(r)$ for $d=2$.
As $\lim_{r\rightarrow 0^+} f(r) = \lim_{r\rightarrow 0^+} u(re_1)=u(0)$, we conclude in the case $d\geq 2$ that $a_2=0$. The same follows for $d=1$ as $u$ is even and $C^1(\bbR)$.

On the other hand, on $(\delta,\infty)$, the general solution is given by
\begin{align*}
	f(r) = \frac{1}{\varepsilon} + b_1 r^{1-\frac{d}{2}} K_{-1+d/2}(\sqrt{\varepsilon}r) + b_2 r^{-1+\frac{d}{2}} I_{-1+d/2}(\sqrt{\varepsilon}r),
\end{align*}
where $I_m, K_m$ denote the modified Bessel function of the first, respectively the second kind. We have $\lim_{r\rightarrow \infty} I_m(r)=\infty$ and $\lim_{r\rightarrow \infty} K_m(r)=0$ for $m\geq -1/2$ (use \cite[Eqs. 9.6.10, 9.6.23]{AS} and $K_{-1/2}(x)=\sqrt{\pi}e^{-x}/\sqrt{2x}, I_{-1/2}(x)=\sqrt{2}\cosh(x)/\sqrt{\pi x}$).
As $0$ is not in the spectrum of $-\Delta +\varepsilon \mathbbm{1}_{B(0,\delta)^c}$, we get that $u$ is bounded and hence $b_2=0$. This yields
\begin{align*}
	f(r) = \begin{cases}
		-\frac{r^2}{2d} + a_1,& r\in (0,\delta),\\
		\frac{1}{\varepsilon} + b_1 r^{1-\frac{d}{2}} K_{-1+d/2}(\sqrt{\varepsilon}r),& r\in (\delta, \infty).
	\end{cases}
\end{align*}
Finally, the coefficients can be determined by the fact that $f\in C^1(\mathbb{R}_{>0})$.
In dimensions $d=1,3$ the Bessel functions can be expressed in elementary functions and the solutions are given by
\begin{align*}
	u(x) = \begin{cases}
		-\frac{|x|^2}{2} + \frac{1}{\varepsilon} + \frac{\delta}{\sqrt{\varepsilon}} + \frac{\delta^2}{2},& \vert x \vert\leq \delta,\\
		\frac{1}{\varepsilon} + \frac{\delta}{\sqrt{\varepsilon}} e^{-\sqrt{\varepsilon}(\vert x \vert-\delta)},& \vert x \vert>\delta
	\end{cases}
\end{align*}
for $d=1$ and by
\begin{align}\label{1dWell}
	u(x) = \begin{cases}
		- \frac{\vert x \vert^2}{6} + \frac{1}{\varepsilon}+\delta^2\left( \frac{1}{6} + \frac{1}{1+\sqrt{\varepsilon}\delta} \right) ,& \vert x \vert \leq \delta,\\
		\frac{1}{\varepsilon} + \frac{\delta^3}{1+\sqrt{\varepsilon} \delta} \frac{e^{-\sqrt{\varepsilon}(\vert x \vert-\delta)}}{\vert x \vert},& \vert x \vert>\delta
	\end{cases}
\end{align}
for $d=3$.

In all dimensions we have that $u$ is radially symmetric and its radial part is monotone decreasing (even exponentially). Furthermore, we have $\lim_{\vert x\vert \rightarrow \infty} u(x) = \frac{1}{\varepsilon}$. This implies that the sublevel set $ \caV(\mu)$ of the effective potential $1/u$ is monotone increasing, remains finite for $\mu<\varepsilon$ and $\lim_{\mu \rightarrow \varepsilon^-}\caV(\mu)=\infty$. This is consistent with the fact that the bottom of the essential spectrum is $\epsilon$ and $c<1,C>1$ in  Theorem~\ref{thm:main}.

For $d=1$ the smallest eigenvalue $\mu_0$ of $-\Delta + \varepsilon \mathbbm{1}_{B(0,\delta)^c}$, for $0<\varepsilon$ sufficiently small, is the smallest positive solution of 
	\begin{align*}
		\sqrt{\varepsilon - \mu_0} = \sqrt{\mu_0} \tan(\sqrt{\mu_0} \delta).
	\end{align*}
	Thus, for $\delta>0$ fixed, we obtain $\mu_0=\varepsilon (1-O(\sqrt{\varepsilon}))$ as $\varepsilon\to 0^+$. 
As discussed at the beginning of the section, this is the same asymptotic behaviour as that of the minimum of $1/u$, see~(\ref{1dWell}). The same holds for $d=3$ where however the bottom of the spectrum is the bottom of the essential spectrum, namely $\caN(\mu)=0$ for $\mu<\varepsilon$ and $\caN(\mu)=\infty$ for $\mu\geq \varepsilon$. Here, $\caV(\mu)$ is arbitrarily large for $\mu\to\varepsilon^-$, showing that $c<1$ in \eqref{eq:MainBathtub}.

\subsection*{Acknowledgements}
The three authors acknowledge financial support from NSERC of Canada.


\begin{thebibliography}{55}
			
			\bibitem{AS} Abramowitz, M., \& Stegun, I. A. (1966). Handbook of mathematical functions with formulas, graphs, and mathematical tables. Washington: US Govt. Print.
			
			\bibitem{ADFM} Arnold, D. N., David, G., Filoche, M., Jerison, D., \& Mayboroda, S. (2019). Localization of eigenfunctions via an effective potential. Communications in Partial Differential Equations, 44(11), 1186-1216.
			
			\bibitem{AFMWZ} Arnold, D., Filoche, M., Mayboroda, S., Wang, W., \& Zhang, S. (2022). The landscape law for tight binding Hamiltonians. Communications in Mathematical Physics, 396, 1339–1391.
			
			\bibitem{BLG} Balasubramanian, S., Liao, Y., \& Galitski, V. (2020). Many-body localization landscape. Physical Review B, 101(1), 014201.
			
			\bibitem{Ca} Calogero, F. (1965). Upper and lower limits for the number of bound states in a given central potential. Communications in Mathematical Physics, 1, 80–88.
			
			\bibitem{CWZ} Chenn, I., Wang, W., \& Zhang, S. (2022). Approximating the ground state eigenvalue via the effective potential. Nonlinearity, 35(6), 3004.
			
			\bibitem{Ch} Christ, M. (1991). On the $\overline{\partial}$ equation in weighted $L^2$ norms in $\mathbb{C}^1$. The Journal of Geometric Analysis, 1(3), 193-230.
			
			\bibitem{Cw} Cwikel, M. (1977). Weak type estimates for singular values and the number of bound states of Schrödinger operators. Annals of Mathematics, 106, 93-100.
			
			\bibitem{DHM} Davey, B., Hill, J., \& Mayboroda, S. (2016). Fundamental matrices and Green matrices for non-homogeneous elliptic systems. Publicacions Matemàtiques. 62(2), 537-614.
			
			\bibitem{DFM} David, G., Filoche, M., \& Mayboroda, S. (2021). The landscape law for the integrated density of states. Advances in Mathematics, 390, 107946.
			
			\bibitem{Dav} Davies, E. B. (1995). Spectral theory and differential operators (Vol. 42). Cambridge University Press.
		
			
			
			\bibitem{FP} Fefferman, C. L. (1983). The uncertainty principle. Bulletin of the American Mathematical Society 9.2: 129-206.
			
			\bibitem{FMT} Filoche, M., Mayboroda, S., \& Tao, T. (2021). The effective potential of an M-matrix. Journal of Mathematical Physics, 62(4), 041902.
			
			\bibitem{F} Frank, R. L. (2014). Cwikel's theorem and the CLR inequality. Journal of Spectral Theory, 4(1), 1-21.
			
			\bibitem{FLW} Frank, R. L., Laptev, A., \& Weidl, T. (2022). Schrödinger operators: eigenvalues and Lieb–Thirring inequalities (Vol. 200). Cambridge University Press.
			
			\bibitem{GN} Grigor’yan, A., \& Nadirashvili, N. (2015) Negative Eigenvalues of Two-Dimensional Schrödinger Operators. Archive for Rational Mechanics and Analysis 217, 975–1028.
		

			\bibitem{HN} Ha, H. V., \& Nguyen, T. T. (2022). Łojasiewicz gradient inequalities for polynomial functions and some applications. Journal of Mathematical Analysis and Applications, 509(1), 125950.
			
			\bibitem{HHRV} Hoang, V., Hundertmark, D., Richter, J., \& Vugalter, S. (2023). Quantitative bounds versus existence of weakly coupled bound states for Schrödinger type operators. In Annales Henri Poincaré, 24(3), 783-842.
			
			\bibitem{HKRV} Hundertmark, D., Kunstmann, P., Ried, T., \& Vugalter, S. (2023). Cwikel’s bound reloaded. Inventiones Mathematicae, 231(1), 111-167.
			
			\bibitem{Iv} Ivrii, V. (2016). 100 years of Weyl’s law. Bulletin of Mathematical Sciences, 6(3), 379-452.
			
			
			\bibitem{La1} Laptev, A. (2000). The Negative Spectrum of a Class of Two-Dimensional Schrödinger Operators with Potentials Depending Only on Radius. Functional Analysis and Its Applications, 34, 305-307.
			
			\bibitem{NL} Laptev, A., \& Netrusov, Y. (1999). On the negative eigenvalues of a class of Schrödinger operators. American Mathematical Society Translations, 189, 173 - 186.
			
			\bibitem{LRS} Laptev, A., Read, L., \& Schimmer, L. (2022). Calogero type bounds in two dimensions. Archive for Rational Mechanics and Analysis, 245(3), 1491-1505.
			
			\bibitem{Li} Lieb, E. H. (1976). Bounds on the eigenvalues of the Laplace and Schrödinger operators. Bulletin of the American Mathematical Society, 82(5), 751–753.
		
			
			\bibitem{LL} Lieb, E. H., \& Loss, M. (2001). Analysis (Vol. 14). American Mathematical Society.
			
			\bibitem{LT} Lieb, E. H., \& Thirring, W. E. (1976). Inequalities for the moments of the eigenvalues of the Schrödinger Hamiltonian and their relation to Sobolev inequalities. Pages 269–303 of: Studies in Mathematical Physics (Essays in Honor of Valentine Bargmann).
			E. H. Lieb, B. Simon, and A. S. Wightman (editors). Princeton University Press,
			Princeton, NJ.
			
			\bibitem{LS} Liskevich, V., \& Skrypnik, I. I. (2010). Harnack inequality and continuity of solutions to elliptic equations with nonstandard growth conditions and lower order terms. Annali di Matematica Pura ed Applicata, 189, 335-356.
			
			\bibitem{MN} Mohamed, A., \& Nourrigat, J. (1991). Encadrement du $N(\lambda)$ pour un opérator de Schrödinger
			avec un champ magnétique et un potentiel électrique, J. Math. Pures Appl. 70,
			87–99.
			
			\bibitem{MLBN} Mohamed, A., Lévy-Bruhl, P., \& Nourrigat, J. (1993). Etude spectrale d'opérateurs liés à des représentations de groupes nilpotents. Journal of functional analysis, 113(1), 65-93.
			
			\bibitem{O1} Otelbaev, M. (1976). Bounds for eigenvalues of singular differential operators. Mathematical notes of the Academy of Sciences of the USSR, 20(6), 1038-1042.
			
			\bibitem{O2} Otelbaev, M. (1979). Imbedding theorems for spaces with a weight and their application to the study of the spectrum of a Schrödinger operator, Trudy Mat. Inst. Steklov., (Procedings of the Steklov Mathematical Institute), 150, 265-305.
			
			\bibitem{PW} Payne, L. E., \& Weinberger, H. F. (1960). An optimal Poincaré inequality for convex domains. Archive for Rational Mechanics and Analysis, 5(1), 286-292.

			
			\bibitem{P} Poggi, B. (2021). Applications of the landscape function for Schr\" odinger operators with singular potentials and irregular magnetic fields. arXiv preprint arXiv:2107.14103.
			
			\bibitem{MP} Poggi, B. G., \& Mayboroda, S. (2021). Exponential decay of fundamental solutions to Schrödinger operators and the landscape function. In 2021 Spring Western Virtual Sectional Meeting. AMS.
			
			
			\bibitem{RS4} Reed, M., \& Simon, B. (1978) Methods of Modern Mathematical Physics IV: Analysis of Operators. Academic Press, Inc.
			
			\bibitem{Rob} Robert, D. (1982). Comportement asymptotique des valeurs propres d’opérateurs du type
			Schrödinger à potentiel “dégénéré”, J. Math. Pures Appl. 61(9), 275–300.
			
			\bibitem{Ro} Rozenblyum, G. V. (1976). Distribution of the discrete spectrum of singular differential operators. Izvestiya Vysshikh Uchebnykh Zavedenii. Matematika, (1), 75-86.
			
			\bibitem{Ro1} Rozenblyum, G. V. (1978). Estimates of the spectrum of the Schrödinger operator. Journal of Soviet Mathematics, 10(6), 934-944.
			
			\bibitem{She1} Shen, Z. (1995). $ L^ p $ estimates for Schrödinger operators with certain potentials. Annales de l'institut Fourier, 45(2), 513-546.
			
			\bibitem{She2} Shen, Z. (1996). Eigenvalue asymptotics and exponential decay of eigenfunctions for Schrödinger operators with magnetic fields. Transactions of the American Mathematical Society, 348(11), 4465-4488.
			
			\bibitem{She3} Shen, Z. (1998) On bounds of $N(\lambda)$ for a magnetic Schrödinger operator, Duke Mathematical Journal, 94(3), 479-507.
			
			\bibitem{She4} Shen, Z. (1999). On fundamental solutions of generalized Schrödinger operators. Journal of Functional Analysis, 167(2), 521-564.
			
			\bibitem{Sha} Shargorodsky, E. (2014). On negative eigenvalues of two-dimensional Schrödinger operators. Proceedings of the London Mathematical Society, 108(2), 441-483.
			
			\bibitem{Si1} Simon, B. (1983). Some quantum operators with discrete spectrum but classically continuous spectrum. Annals of Physics, 146(1), 209-220.
			
			\bibitem{Si2} Simon, B. (1983). Nonclassical eigenvalue asymptotics. Journal of functional analysis, 53(1), 84-98.
		
			
			\bibitem{Sm} Smith, H. F. (1991). Parametrix Construction for a Class of Subelliptic Differential Operators, Duke Mathematical Journal, 63(2), 343-354.
			
			\bibitem{WZ} Wang, W., \& Zhang, S. (2021). The exponential decay of eigenfunctions for tight-binding Hamiltonians via landscape and dual landscape functions. Annales Henri Poincaré, 22, 1429-1457.
			
			\bibitem{St1} Steinerberger, S. (2017). Localization of quantum states and landscape functions. Proceedings of the American Mathematical Society, 145(7), 2895-2907. 
			
			\bibitem{St2} Steinerberger, S. (2021). Regularized potentials of Schrödinger operators and a local landscape function. Communications in Partial Differential Equations, 46(7), 1262-1279.
			
			\bibitem{W1} Weyl, H. Über die asymptotische Verteilung der Eigenwerte. Nachrichten von der Gesellschaft der Wissenschaften zu Göttingen, Mathematisch-Physikalische Klasse 1911
			
			
			\bibitem{W2}  Weyl, H. (1912). Das asymptotische Verteilungsgesetz der Eigenwerte linearer partieller Differentialgleichungen (mit einer Anwendung auf die Theorie der Hohlraumstrahlung). Mathematische Annalen, 71(4), 441-479.
			
			
			\bibitem{W3} Weyl, H. (1912). Über die Abhängigkeit der Eigenschwingungen einer Membran und deren Begrenzung. Journal für die reine und angewandte Mathematik 141: 1-11.
			
			\bibitem{W4} Weyl, H. (1913). Über die Randwertaufgabe der Strahlungstheorie und asymptotische Spektralgesetze. Journal für die reine und angewandte Mathematik 143: 177-202.
			
			\bibitem{Zh} Zhong, J. (1993) Harmonic Analysis for Some Schr\"odinger Type Operators, Ph.D. Thesis,
			Princeton University.
		\end{thebibliography}
	\end{document}